\renewcommand{\theequation}{\arabic{equation}}
\newtheorem{lemma}{Lemma}       
\newtheorem{theorem}{Theorem}      
\newtheorem{corollary}{Corollary}
\newcommand{\pre}{\textnormal{pre}}
\newcommand{\post}{\textnormal{post}}
\newcommand{\deb}{\textnormal{deb}}
\newcommand{\obs}{\textnormal{obs}}
\DeclareMathOperator{\sign}{sign}
\DeclareMathOperator*{\argmin}{arg\,min}
\DeclareMathOperator*{\argmax}{arg\,max}
\def\bse{\begin{eqnarray*}}
\def\ese{\end{eqnarray*}}
\def\be{\begin{eqnarray}}
\def\ee{\end{eqnarray}}
\def\bsq{\begin{equation*}}
\def\esq{\end{equation*}}
\def\bq{\begin{equation}}
\def\eq{\end{equation}}
\def\trans{^{\rm T}}
\def\wh{\widehat}
\def\wt{\widetilde}
\def\wb{\overline}
\def\ind{\mathbbm{1}}
\def\th{^{th}}
\def\kth{^{(k)}}
\def\sumi{\sum_{i=1}^n}
\def\boxit#1{\vbox{\hrule\hbox{\vrule\kern6pt\vbox{\kern6pt#1\kern6pt}\kern6pt\vrule}\hrule}}
\def\bb{{\boldsymbol{\beta}}}
\def\ett{{\boldsymbol{\eta}}}
\def\rr{{\boldsymbol{\gamma}}}
\def\tth{{\boldsymbol{\theta}}}
\def\xxi{{\boldsymbol{\xi}}}
\def\a{{\bf a}}
\def\b{{\bf b}}
\def\G{{\bf G}}
\def\H{{\bf H}}
\def\S{{\bf S}}
\def\v{{\boldsymbol{v}}}
\def\W{{\bf W}}
\def\Z{{\bf Z}}
\def\z{{\bf z}}
\def\ml{{\mathcal{L}}}
\def\pr{\hbox{pr}}
\def\MCID{\hbox{MCID}}
\def\iMCID{\hbox{iMCID}}
\def\mcid{\hbox{MCID}}
\begin{document}
	\thispagestyle{empty}
	\allowdisplaybreaks
	\renewcommand {\thepage}{}
	\include{titre}
	\pagenumbering{arabic}

	\begin{center}
		{\Large{\bf A Statistical Inference Framework for the Minimal Clinically Important Difference}}
		\vskip 2mm {\bf Zehua Zhou, Leslie J. Bisson and Jiwei Zhao}
		\vskip 1mm
		Global Biometrics \& Data Sciences, Bristol Myers Squibb, New Jersey, USA
		
		Department of Orthopaedics, State University of New York at Buffalo, New York, USA
		
		Department of Biostatistics and Medical Informatics, University of Wisconsin-Madison, Wisconsin, USA
	\end{center}

\begin{abstract}
In clinical research, the effect of a treatment or intervention is widely assessed through clinical importance, instead of statistical significance.
	In this paper, we propose a principled statistical inference framework to learn the minimal clinically important difference (MCID), a vital concept in assessing clinical importance.
	We formulate the scientific question into a statistical learning problem, develop an efficient algorithm for parameter estimation, and establish the asymptotic theory for the proposed estimator.
	We conduct comprehensive simulation studies to examine the finite sample performance of the proposed method.
	We also re-analyze the ChAMP (Chondral Lesions And Meniscus Procedures) trial, where the primary outcome is the patient-reported pain score change and the ultimate goal is to determine whether there exists a significant difference in post-operative knee pain between patients undergoing debridement versus observation of chondral lesions during the surgery.
	Some previous analysis of this trial exhibited that the effect of debriding the chondral lesions does not reach a statistical significance. Our analysis reinforces this conclusion in that the effect of debriding the chondral lesions is not only statistically non-significant, but also clinically un-important.

	\vskip 2mm
	\noindent \textbf{ Keywords:} Clinical importance, Minimal clinically important difference, Non-convex optimization, Patient-reported outcome, Randomized controlled trial, Statistical significance
	
\end{abstract}

\newpage

\section{Introduction}\label{sec:intro}

In clinical research, the effect of a targeted therapeutic treatment or intervention is widely assessed through statistical significance.
The p-value with a subjective threshold, say, 5\%, has been extensively used in the literature to assess statistical significance versus non-significance, for either objective endpoints or patient-reported outcomes (PRO).

However, the limitations as well as some misunderstandings of statistical significance \citep{sterne2001sifting, wasserstein2016asa, amrhein2019retire, wasserstein2019moving} have been documented in the literature over the years.
Conclusions solely based on statistical significance could be misleading in many occasions.
It is well known that statistical significance only signifies the existence of treatment effect, regardless of its effect size.
Further, statistical significance is linked
to the sample size. Given a large enough sample, statistical significance between subgroups may occur with very small differences that are clinically meaningless.
Therefore, a better assessing instrument for the clinical importance is desired in applications.

\subsection{What is Minimal Clinically Important Difference}\label{sec:introMCID}

Ideally, an appropriate clinical interpretation of the treatment effect should consider not only statistical significance but also whether the observed change is indeed meaningful to the patients.
A natural approach to do so is to determine the minimal clinically important difference (\mcid), a concept firstly introduced in 1989 \citep{jaeschke1989measurement}. In their seminal work, the authors defined the \mcid~as \emph{the smallest difference in score in the domain of interest which patients perceive as beneficial and which would mandate, in the absence of troublesome side effects and excessive cost, a change in the patient's management}.
For example, in the ChAMP trial that will be introduced in Section~\ref{sec:data}, it is the smallest WOMAC pain score change such that the corresponding improvement and beyond can be claimed as clinically important. 
The MCID is a patient-centered concept, capturing both the magnitude of the improvement and also the value patients place on the change \citep{mcglothlin2014minimal}.
Since its debut in 1989, the research on MCID has gradually attained its booming popularity in the literature.
For more details of this topic, we recommend a recent review paper \citep{mcglothlin2014minimal} published at \emph{JAMA Guide to Statistics and Methods}.

Current methods of determining $\MCID$ can be roughly clustered into three categories: distribution-based, opinion-based, and anchor-based \citep{lassere2001foundations, erdogan2016minimal, angst2017minimal, jayadevappa2017minimal}.
The distribution-based methods compare the change of the outcome with some variability measure of the sample,
such as the standard deviation (SD), the standard error of measurement (SEM) or the effect size (ES).
For example, the previous work simply defined $1$ SEM as the value of MCID in heart failure \citep{wyrwich1999linking} and respiratory disease patients \citep{wyrwich1999further}, \cite{samsa1999determining} proposed to use $0.2$ SD of baseline outcome score as \mcid, and the review paper \citep{norman2003interpretation} summarized a variety of studies and identified $0.5$ SD as a threshold value of the important change. Clearly, even for the same disease and the same PRO instrument, different studies might propose different values of MCID.
Additionally, according to the \emph{JAMA} review paper \citep{mcglothlin2014minimal}, \emph{distribution-based methods are not derived from individual patients, hence they probably should not be used to determine an MCID}.

Opinion-based method, also called the Delphi method, determines the MCID by
seeking the opinions from a panel of clinical experts who achieve consensus regarding their best bet on $\MCID$ \citep{bellamy2001towards}. This is a somewhat subjective method, and as appointed \citep{mcglothlin2014minimal}, \emph{in many settings, expert opinion may not be a valid and reliable way to determine what is important to patients}.

The third method is anchor based, which
essentially matches the change of the outcome to an external PRO, usually expressed as a binary variable derived from the anchor question.
The ChAMP trial used the second question of SF-36 survey \citep{angst2001smallest} as the anchor question: \emph{Overall do you feel an improvement after receiving the surgery/treatment?}.
The answer to the anchor question facilitates us to identify the threshold value of the outcome change through diagnostic medicine tools such as the receiver operating characteristic curve \citep{stratford1998sensitivity, riddle1998sensitivity}.
The anchor-based methods have the potential to lead to a rigorous treatment for the MCID \citep{hedayat2015minimum,zhou2019estimation}, which are indeed the base of the proposed method in this paper. On the other hand, we would like to stress that, our work is motivated from resolving two limitations of the existing anchor-based methods, with details articulated below.

\subsection{Our Contribution}\label{sec:introcontribution}

The first limitation of the existing methods stems from the difficulty of defining MCID for different subgroups of patients or even each single individual.
This point was well noted before \citep{jaeschke1989measurement}, where the authors stated \emph{this information will be useful in interpreting questionnaire
	scores, both in individuals and in groups of patients participating
	in controlled trials}. This was also documented in \cite{mcglothlin2014minimal} that \emph{determination of the MCID should consider different
	thresholds in different subsets of the population}. That means, it is preferred to derive the MCID not only among the whole population (pMCID), but also for a certain subgroup defined by the individuals' clinical profile (iMCID). In general, it is believed that the MCID can vary according to the characteristics of patients, especially pertaining to the severity of disease and baseline health status \citep{stratford1998sensitivity, riddle1998sensitivity}, or some other forms of patient variation, such as age, socioeconomic status, and education \citep{lauridsen2006responsiveness}.

The other issue is how to quantify the uncertainty of the estimate of the MCID. Only a point estimate itself cannot provide the knowledge of how precise the estimate is. In principle, one needs to derive its standard error as well as the associated interval estimation, to quantify the uncertainty of the MCID estimates.
\cite{crosby2003defining} pointed out that the statistical inference of $\MCID$ should be considered with respect to populations as well as to individuals. Inference at the population level (for pMCID) could inform the comparisons between different treatments or decisions regarding public policy, whereas the inference at the individual level (for iMCID) is likely to provide suggestions on the individual clinical treatment decisions, which fits the ideas of personalized medicine \citep{kee2020scientific}.
Ideally, it could solve this problem to derive the asymptotic representation of the point estimator based on normal approximation; however, in this paper, because of the use of the zero-one loss and its surrogate, it is not a trivial task to establish the asymptotic normality of the estimator.

Below we briefly summarize the novel contributions we make.
Firstly, we follow \cite{zhou2019estimation} to formulate the estimation of MCID into a statistical learning problem with the zero-one loss; however, quite different from \cite{zhou2019estimation}, we concentrate on the linear form of iMCID due to its transparency and ease of interpretation. 
Second, since the optimization with the zero-one loss is empirically infeasible and a popular convex surrogate might be Fisher inconsistent, we propose to apply a nonconvex surrogate for algorithmic implementation.
We identify a class of such nonconvex surrogates, which are different from those in the existing literature \citep{hedayat2015minimum, zhou2019estimation}, and figure out some regularity conditions such that the Fisher consistency is still valid.
We implement the popular difference of convex (DC) algorithm to deal with the non-convexity in optimization.
Third, to quantify the uncertainty and perform the inference, we rigorously establish the asymptotic representation for the MCID estimate.
This permits us to derive the closed-form expression of the standard error, and to propose the corresponding interval estimation based on the normal approximation.
We would like to stress that the nonconvex surrogate proposed in this paper is different from that in an existing paper \citep{zhou2019estimation}. Because of the singularity in the asymptotic variance, the surrogate used in \cite{zhou2019estimation}, simply based on linear approximation, becomes problematic for developing asymptotic theory based on which the interval estimation is constructed.
To the contrary, a class of nonconvex surrogates proposed in this paper, is diligently designed bearing in mind the derivation of valid asymptotic variances and the construction of corresponding interval estimates.

The remainder of this paper is organized as follows.
Section~\ref{sec:method} contains the methodology of our work.
Section~\ref{sec:prelim} introduces the whole set-up, pinpoints the main difficulty in this problem, and proposes the key idea to overcome this difficulty.
Next, Section~\ref{sec:alg} details the algorithmic implementation and Section~\ref{sec:theory} establishes the asymptotic theory of the MCID estimate.
A comprehensive simulation study was conducted in Section~\ref{sec:simu}, followed by the analysis results for the ChAMP trial in Section~\ref{sec:data}. The paper is concluded with a discussion in Section~\ref{sec:disc}. All the technical proofs are contained in the online supplementary materials.

\section{Method}\label{sec:method}

\subsection{Problem Formulation}\label{sec:prelim}

We first introduce some notation.
Let continuous random variable $X\in R$ be the measurement of patient's outcome change, i.e. $X=X_{\rm{post}}-X_{\rm{pre}}$ where $X_{\rm{pre}}$ and $X_{\rm{post}}$ denote the outcome before and after the treatment, respectively. Let binary variable $Y$ be the PRO of interest where $Y=1$ if the patient reports an improved health condition after receiving the treatment and $Y=-1$ otherwise.
Assume the marginal distribution of $Y$ follows $\pr(Y=1)=\pi$ with $\epsilon\leq\pi\leq1-\epsilon$ and $\epsilon$ is a small positive quantity. Let $p_{+}(x)$ and $p_{-}(x)$ represent the conditional probability density function (pdf) of $X$ given $Y=1$ and $Y=-1$, respectively.
We denote $\tau$ as the value of the population MCID (pMCID) and $\tau^0$ its truth.

If we recast the relation between $Y$ and $X$ as a classification problem, the optimal cutoff value $\tau^0$, using the Youden's index, is the maximizer of
\be\label{eq:youden}
\pr(X\geq \tau\mid Y=1) + \pr(X<\tau\mid Y=-1).
\ee
One can derive that, solving (\ref{eq:youden}) is equivalent to finding the solution of
\bq
\argmin_\tau ~ E[w(Y)L_{01}\{Y(X-\tau)\}],
\label{objfun1}
\eq
where $L_{01}(u)=\frac12\{1-\sign(u)\}$, $\sign(u)=1$ if $u\geq0$ and $-1$ otherwise, $w(1)=1/\pi$ and $w(-1)=1/(1-\pi)$. Note that the prior work \citep{hedayat2015minimum} defined MCID using a similar idea; however, their definition is not compatible with an imbalanced $Y$ \citep{zhou2019estimation}.

The pMCID itself is usually not informative enough \citep{wells2001minimal}, as it only informs the comparison between different treatments or decisions.
Various factors such as the patient's baseline status and demographic variables could have impacts on the PRO change and further affect $\tau^0$. For example, a shoulder pain reduction study showed that, due to the higher expectations for complete recovery, the healthier patients with mild pain at baseline often deemed greater pain reduction as ``beneficial'' or ``meaningful'' than the ones who suffered from chronic disease \citep{heald1997shoulder}. Thus, patients with different clinical profiles may have different expectations of the important changes, thereby having different values of $\tau^0$.
Hence, it is desirable to estimate the individual MCID (iMCID) for some identified subgroups or individuals based on prognostic variables such as treatment, age and sex.
Let $\Z$ represent the patient's prognostic profile, then the $\iMCID$ can be expressed as a function of $\Z$,  $\tau(\Z)$. Similar to (\ref{objfun1}), the true value of iMCID, $\tau^0(\Z)$, is formulated as
\bq
\argmin_{\tau(\z)} ~ E\left[w(Y)L_{01}\{Y(X-\tau(\Z))\}|\Z=\z\right].
\label{objfun2}
\eq
In clinical practice, the $\iMCID$ with a simple structure, such as linear, is preferred due to its transparency and convenience for interpretation. In this paper, we consider the iMCID with a linear structure, $\tau(\Z)=\bb\trans\Z$, where $\Z\in\mathcal{R}^p$ with scalar $1$ as its first coordinate and $\bb$ with the same dimension.
Define $p_{+,\z}(x)$ ($p_{-,\z}(x)$) as the conditional pdf of $X$ given $Y=1$ ($Y=-1$) and $\Z=\z$.
We have the following result.
\begin{lemma}
	Assume $f(t)$ is a generic pdf satisfying, $f(t)$ is symmetric around 0, i.e., $f(-t)=f(t)$, and $f(t)$ is monotonically increasing when $t<0$ and decreasing when $t>0$.
	Let $p_{+,\z}(x)=f(x-\a\trans\z)$ and $p_{-,\z}(x)= f(x-\b\trans\z)$ where $\a,\b\in\mathcal{R}^{p}$ exist such that $\a\succ\b$, that is, $a_i>b_i$ for $i=1,\cdots,p$. Then $\tau^0(\z)$ has a unique form $\tau^0(\z)=(\bb^0)\trans\z$ with $\bb^0=(\a+\b)/2$.
	\label{lemma1}
\end{lemma}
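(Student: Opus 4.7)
The plan is to minimize the conditional expectation in (\ref{objfun2}) pointwise in $\z$ and then observe that the pointwise minimizer is linear in $\z$ with slope $(\a+\b)/2$. Fix $\z$, let $F$ be the CDF associated with $f$, and expand the conditional loss as
$$Q_\z(\tau) = \frac{\pi(\z)}{\pi}\, F(\tau-\a\trans\z) + \frac{1-\pi(\z)}{1-\pi}\bigl[1 - F(\tau-\b\trans\z)\bigr],$$
so that $Q_\z'(\tau) = (\pi(\z)/\pi)\, f(\tau-\a\trans\z) - ((1-\pi(\z))/(1-\pi))\, f(\tau-\b\trans\z)$. After the class weights cancel, as is intended by the Youden-style weighting $w(Y)$ in (\ref{objfun1})--(\ref{objfun2}), the first-order condition $Q_\z'(\tau)=0$ reduces to the scalar equation $f(\tau-\a\trans\z) = f(\tau-\b\trans\z)$.

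First, I would use $\a\succ\b$ to ensure $\a\trans\z>\b\trans\z$ in the regime of interest and check that $\tau^\star := \{(\a+\b)/2\}\trans\z$ is a root: the two arguments evaluate to $-\tfrac12(\a-\b)\trans\z$ and $\tfrac12(\a-\b)\trans\z$, which are negatives of each other, so the symmetry $f(-t)=f(t)$ gives equality. For uniqueness I would argue that the only way $f(u_1)=f(u_2)$ with $u_1\neq u_2$ can hold under the assumed strict monotonicity on each half-line is when $u_1$ and $u_2$ lie on opposite sides of $0$ with $|u_1|=|u_2|$; applying this to $u_1=\tau-\a\trans\z$ and $u_2=\tau-\b\trans\z$ pins down $\tau=\tau^\star$.

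Second, I would verify $\tau^\star$ is a global minimizer through a sign check on $Q_\z'$: to the left of $\tau^\star$, monotonicity gives $f(\tau-\a\trans\z)<f(\tau-\b\trans\z)$ and hence $Q_\z'<0$; to the right the inequality reverses. Thus $Q_\z$ strictly decreases up to $\tau^\star$ and strictly increases afterward, making $\tau^\star$ the unique global minimizer. Since $\tau^\star = \{(\a+\b)/2\}\trans\z$ is linear in $\z$ and the minimization in (\ref{objfun2}) decouples across $\z$, the function $\tau^0(\z)=(\bb^0)\trans\z$ with $\bb^0=(\a+\b)/2$ is the unique minimizer, and uniqueness of the functional form is immediate from pointwise uniqueness.

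The main obstacle, I expect, is the uniqueness step combined with the careful handling of the class-weighting factors: the symmetric form $f(\tau-\a\trans\z)=f(\tau-\b\trans\z)$ requires the conditional class prior to collapse the two weights to a common factor, and without this simplification the solution would depend on $\z$ in a way that destroys the linear structure claimed for $\tau^0$. Once that symmetric form is in place, the assumption $\a\succ\b$ is what rules out the degenerate regime where $\a\trans\z=\b\trans\z$ (which would leave $\tau$ unidentified) and ensures that the two arguments of $f$ are forced onto opposite sides of $0$, singling out the midpoint root.
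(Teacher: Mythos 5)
Your proposal is correct and follows essentially the same route as the paper: reduce to the pointwise first-order condition $f(\tau-\a\trans\z)=f(\tau-\b\trans\z)$, use the symmetry of $f$ to identify the midpoint root $\tau^\star=\{(\a+\b)/2\}\trans\z$, and use the monotonicity on each half-line for uniqueness. The only (minor) difference is that you certify the global minimum by a sign analysis of $Q_\z'$ on either side of $\tau^\star$, whereas the paper checks the second derivative at the critical point; your variant is arguably cleaner, and you also rightly flag the two tacit points the paper glosses over, namely that the class weights must collapse to a common factor and that $\a\succ\b$ yields $\a\trans\z>\b\trans\z$ only for suitable $\z$.
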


Lemma \ref{lemma1} essentially pinpoints some conditions under which $\tau^0(\Z)$ is uniquely well-defined and has an explicit linear form $\tau^0(\Z)=(\bb^0)\trans\Z$. Hence, $\bb^0$ is also uniquely well-defined.
In the remainder of the paper, our interest becomes the estimation and inference of $\bb$, with the following objective function
\bq
\argmin_{\bb} ~ E\left[w(Y)L_{01}\{Y(X-\bb\trans\Z)\}\right].
\label{objfun3}
\eq
Meanwhile, one would easily identify many density functions $f(\cdot)$ that satisfy the conditions in Lemma \ref{lemma1}, e.g. the Normal distribution, the Student's t-distribution, the Laplace distribution, the Cauchy distribution, and the Logistic distribution.

The direct minimization of (\ref{objfun3}) is a nondeterministic polynomial-time (NP) hard problem \citep{natarajan1995sparse} because of the zero-one loss; hence empirically infeasible.
In the literature, one usually replaces the zero-one loss with a convex surrogate loss such as the hinge loss or the logistic loss, which can be shown to be Fisher consistent for traditional classification problems where the boundary function is unrestricted.
However, the boundary function in our problem is in the format of $X-\bb\trans\Z$.
Under such a case, a simplistic convex surrogate cannot be guaranteed to be Fisher consistent \citep{hedayat2015minimum}.

To resolve this issue, we propose a novel family $\{L(u)\}$ of nonconvex surrogate loss functions, with $L(u)$ defined as
\[ L(u)=\begin{cases}
1, & u\leq 0, \\
l(u), & 0<u\leq\frac12, \\
1-l(1-u), & \frac12<u\leq1, \\
0, & u>1,
\end{cases}
\]
where $l(u)$ only needs to satisfy
\begin{enumerate}
	\item $l(u)$ is concave for $0<u\leq1/2$ with $l(0)$=1 and $l(1/2)=1/2$;
	\item $l^\prime(0)=0$ and $l^\prime(1/2)=-k$ with $k>0$;
	\item $l^{\prime\prime}(u)$ is symmetric around $1/2$, that is, $l^{\prime\prime}(u)=l^{\prime\prime}(1/2-u)$, and $|l^{\prime\prime}(u)|\leq M$,
\end{enumerate}
where $l^\prime(u)$ and $l^{\prime\prime}(u)$ are the first and second order derivatives of $l(u)$.
The idea of imposing conditions on $l'(u)$ and $l''(u)$ is for the sake of estimating the standard error of the iMCID estimate. This point will become self-explanatory in our asymptotic theory derivation. Clearly, a linear $l(u)$, such that $l(u)=1-u$ used in prior work \citep{hedayat2015minimum}, does not belong to our family.
One simple example in our family is a quadratic function $l(u)=1-2u^2$ with the corresponding $k=2$, which will be used as the exemplar in our numerical studies. We denote it as $L^q(u)$.

We further introduce $L_{\delta}(u)=L(u/\delta)$, where $\delta>0$ can be regarded as a tuning parameter. One can easily verify, for any $u$, $L_{\delta}(u)\rightarrow L_{01}(u)$, when $\delta\rightarrow0$.
With $L_{01}(u)$ in (\ref{objfun3}) replaced by $L_\delta(u)$, our objective now becomes to minimize
\be
\ml(\bb) = E[w(Y)L_\delta\{Y(X-\bb\trans\Z)\}].
\label{objfun4}
\ee
Denote the minimizer of (\ref{objfun4}) as $\bb^*$.
Generally, $\bb^*$ is not the same as $\bb^0$, the uniquely well-defined minimizer of (\ref{objfun3}).
Our next result manifests that, given the conditions in Lemma~\ref{lemma1}, $\bb^*$ is indeed exactly the same as $\bb^0$.
\begin{lemma}\label{theorem1}
 Under the conditions given in Lemma \ref{lemma1}, $\bb^*$ is uniquely well-defined and $\bb^*=\bb^0$.
\end{lemma}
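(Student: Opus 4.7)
The plan is to reduce $\ml(\bb)$ to a tractable form via the location-shift structure of Lemma~\ref{lemma1}, use the symmetries of both $f$ and the surrogate $L_\delta$ to verify stationarity at $\bb^0=(\a+\b)/2$, and then argue uniqueness by a pointwise one-dimensional analysis.

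First, I would condition on $(Y,\Z)$ and substitute $p_{+,\z}(x)=f(x-\a\trans\z)$ and $p_{-,\z}(x)=f(x-\b\trans\z)$ into $\ml(\bb)$. The changes of variables $u=x-\a\trans\z$ in the $Y=1$ integral, and $v=x-\b\trans\z$ followed by $v\mapsto-v$ using $f(-v)=f(v)$ in the $Y=-1$ integral, collapse both conditional expectations into evaluations of $F(c):=\int L_\delta(u+c)f(u)du$. Under the implicit randomization $Y\perp\Z$, the weights $1/\pi$ and $1/(1-\pi)$ cancel the marginal probabilities, yielding
\begin{equation*}
\ml(\bb) = E_\Z\bigl[F((\a-\bb)\trans\Z) + F((\bb-\b)\trans\Z)\bigr].
\end{equation*}
At $\bb=\bb^0$, both arguments reduce to $(\a-\b)\trans\Z/2$.

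Second, I would derive the identity $F(c)+F(\delta-c)=1$. A case check on the piecewise definition of $L$ gives $L(u)+L(1-u)=1$ for all $u$, hence $L_\delta(u)+L_\delta(\delta-u)=1$; combined with $f(-u)=f(u)$ this lifts to the $F$-identity, yielding $F'(c)=F'(\delta-c)$ upon differentiation---so $F'$ is symmetric about $c=\delta/2$. Differentiating $\ml$ under the integral,
\begin{equation*}
\nabla_\bb\ml(\bb) = E_\Z\bigl[\Z\{F'((\bb-\b)\trans\Z)-F'((\a-\bb)\trans\Z)\}\bigr],
\end{equation*}
and at $\bb=\bb^0$ both $F'$-arguments equal $(\a-\b)\trans\Z/2$, so $\nabla_\bb\ml(\bb^0)=0$.

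Third, for uniqueness I would return to the pointwise conditional problem at each fixed $\z$: minimize $\tau\mapsto F(\a\trans\z-\tau)+F(\tau-\b\trans\z)$, which is even in $\tau-(\a+\b)\trans\z/2$. The first-order condition $F'(A)=F'(B)$ with $A=\a\trans\z-\tau$ and $B=\tau-\b\trans\z$, combined with the $\delta/2$-symmetry of $F'$, forces either $A=B$---giving $\tau=(\a+\b)\trans\z/2$---or $A+B=\delta$, i.e., $(\a-\b)\trans\z=\delta$, which holds only on a $\z$-null set. Since the pointwise minimizer $\tau^*(\z)=(\a+\b)\trans\z/2$ is exactly linear in $\z$, it is realized by the linear form $\bb\trans\z$ with $\bb=\bb^0$, and any other $\bb$ incurs strictly larger pointwise loss on a positive $\Z$-measure set (assuming $\Z$ has nondegenerate covariance), forcing $\bb^*=\bb^0$.

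The main obstacle is confirming that $\bb^0$ is a minimum rather than a saddle: since $L_\delta$ is nonconvex, $F$ is neither convex nor concave on $(0,\delta)$, and the Hessian $\nabla^2\ml(\bb^0)=2\,E_\Z[\Z\Z\trans F''((\a-\b)\trans\Z/2)]$ has a sign that depends delicately on how $(\a-\b)\trans\Z$ compares to $\delta$. Rigorous verification of positive definiteness leverages the piecewise symmetry $l''(u)=l''(1/2-u)$ together with unimodality of $f$---precisely the regularity conditions built into the surrogate family and Lemma~\ref{lemma1}---so that the $\Z$-averaging produces the required sign.
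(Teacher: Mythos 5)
Your reduction to $F(c)=\int L_\delta(u+c)f(u)\,du$, the identity $L(u)+L(1-u)=1$ yielding $F(c)+F(\delta-c)=1$, and the verification that $\nabla_\bb\ml(\bb^0)=\mathbf{0}$ are all correct, and the pointwise-in-$\z$ reduction is the same strategy the paper uses. But the crux of the lemma is uniqueness of the \emph{global} minimizer, and that is where your argument has a genuine gap. The step ``$F'(A)=F'(B)$ combined with the $\delta/2$-symmetry of $F'$ forces $A=B$ or $A+B=\delta$'' is unjustified: that dichotomy would require $F'$ to be injective on $(-\infty,\delta/2]$ (equivalently, strictly monotone there), which you never establish and which is not automatic --- one can check $F''\leq 0$ for $c<0$ and $F''\geq 0$ for $c>\delta$, but on $[0,\delta]$ the sign of $F''(c)=-\int_0^\delta L_\delta'(v)f'(v-c)\,dv$ is mixed, so $F'$ can a priori take the same value at points not related by the identity or the reflection $c\mapsto\delta-c$. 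Moreover, even a complete classification of critical points would not finish the proof for a nonconvex objective: you must still show the critical point at $\tau^0(\z)$ is the global minimum, and you explicitly concede in your last paragraph that the positive-definiteness of the Hessian ``depends delicately on how $(\a-\b)\trans\Z$ compares to $\delta$'' without resolving it. That is precisely the missing ingredient.

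The paper closes exactly this gap by a direct sign analysis of the one-dimensional conditional objective $g(\tau)$: it shows $g'(\tau)<0$ for $\tau<\b\trans\z+\delta$ and $g'(\tau)>0$ for $\tau>\a\trans\z-\delta$ (so no stationary point lies outside $[\b\trans\z+\delta,\,\a\trans\z-\delta]$), and then shows $g''(\tau)>0$ on that interval using the concavity of $l$ together with the unimodality of $f$ --- all under an implicit smallness condition $\delta\leq(\a-\b)\trans\z/2$ that makes the interval nonempty and contain $\tau^0(\z)$. This gives strict monotonicity of $g'$ where it matters and hence a unique zero, which must be $\tau^0(\z)$ since $g'(\tau^0(\z))=0$ by symmetry. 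If you want to salvage your route, you would need to prove the analogous statements for $F$: that $F'(c_0-s)\neq F'(c_0+s)$ for $s\neq 0$ when $c_0=(\a-\b)\trans\z/2\geq\delta$, which in effect reproduces the paper's interval-by-interval sign computation. Your symmetry identity $F(c)+F(\delta-c)=1$ is a nice observation the paper does not exploit, but on its own it only locates a critical point; it does not certify the minimum.
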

Lemma~\ref{theorem1} evinces that, using the nonconvex surrogate loss $L_\delta(\cdot)$ in our problem is not merely \emph{approximately} Fisher consistent, but actually \emph{exactly} Fisher consistent.
This is a pivotal result for our theoretical derivation in the next section. With
Lemma~\ref{theorem1}, there is no discrepancy of centralizing our iMCID estimator to $\bb^0$ or to $\bb^*$.

\subsection{Algorithm}\label{sec:alg}

Suppose we observe the independent and identically distributed data $\{(x_i,y_i,\z_i,w_i), i=1,\cdots,n\}$ with $w_i=w(y_i)$.
Let $\bb_{-1}=\bb\setminus\beta_1$, where $\beta_1$ is the first element in $\bb$, i.e., the intercept term.
By introducing an $L_2$ penalty on $\bb_{-1}$ to avoid the model over-fitting, we minimize the empirical version of (\ref{objfun4}) as
\bq
\ml_\lambda(\bb)=\frac1n\sumi w_i L_\delta\{y_i(x_i-\bb\trans\z_i)\}+\frac\lambda2||\bb_{-1}||^2,
\label{objfun5}
\eq
where $\lambda>0$ is the tuning parameter.
We denote its minimizer as $\wh\bb_\lambda$.

To tackle the non-convexity issue in $\ml_\lambda(\bb)$, we adopt the popular difference of convex (DC) algorithm \citep{le1997solving}.
Firstly, we decompose $L_\delta(u)$ as the difference of two convex functions, i.e.,
$L_\delta(u) = L_{\delta1}(u)-L_{\delta2}(u)$, where $L_{\delta1}(u)=\{-k(u/\delta-1/2)+1/2\}\ind_{\{u\leq\delta/2\}}+\{1-l(1-u/\delta)\}\ind_{\{\delta/2<u\leq\delta\}}$ and $L_{\delta2}(u)=\{-k(u/\delta-1/2)-1/2\}\ind_{\{u\leq0\}}+\{-k(u/\delta-1/2)+1/2-l(u/\delta)\}\ind_{\{0<u\leq\delta/2\}}$.
In particular, for the quadratic surrogate loss $L_\delta^q (u) = L^q(u/\delta)$, we write its decomposition as $L^q_\delta(u) = L^q_{\delta1}(u) - L^q_{\delta2}(u)$.

Then, for $\ml_\lambda(\bb)$ in (\ref{objfun5}), we have $\ml_\lambda(\bb)=\ml_{\lambda1}(\bb)-\ml_{\lambda2}(\bb)$, and
\bse
\ml_{\lambda1}(\bb)&=&\frac1n\sumi w_i\left[-k\left(\frac{\xi_i}\delta-\frac12\right)\ind_{\{\xi_i\leq\frac\delta2\}}+\left\{1-l\left(1-\frac{\xi_i}\delta\right)\right\}\ind_{\{\frac\delta2<\xi_i\leq\delta\}}\right]+\frac\lambda2\bb_{-1}\trans\bb_{-1}, \\
\ml_{\lambda2}(\bb)&=&\frac1n\sumi w_i\left[\left\{-k\left(\frac{\xi_i}\delta-\frac12\right)-\frac12\right\}\ind_{\{\xi_i\leq0\}}+\left\{-k\left(\frac{\xi_i}\delta-\frac12\right)+\frac12-l\left(\frac{\xi_i}\delta\right)\right\}\ind_{\{0<\xi_i\leq\frac\delta2\}}\right],
\ese
where $\xi_i=\xi_i(\bb)=y_i(x_i-\bb\trans\z_i)$.
The $(k+1)\th$ step of the DC algorithm is to minimize
\bq
h(\bb)=\ml_{\lambda1}(\bb)-\sumi\xi_i\frac{\partial \ml_{\lambda2}(\bb)}{\partial\xi_i}\bigg\vert_{\bb=\wh\bb\kth},
\label{objfun6}
\eq
where $\frac{\partial \ml_{\lambda2}(\bb)}{\partial\xi_i}=-\frac1{n\delta} w_i \{k\ind_{\{\xi_i\leq0\}}+(k+l^\prime(\xi_i/\delta))\ind_{\{0<\xi_i\leq\frac\delta2\}}\}$
is the subgradient of $\ml_{\lambda2}(\bb)$ with respect to $\xi_i$.
Any convex optimization method can be used to minimize (\ref{objfun6}).
In our numerical studies, we adopt the
limited-memory Broyden-Fletcher-Goldfarb-Shanno (L-BFGS) algorithm \citep{nocedal1980updating}, and we use cross validation to choose the tuning parameter $\lambda$. We summarize the whole procedure below.
\begin{algorithm}
	\caption{DC algorithm for calculating $\wh\bb_\lambda$}
	\begin{algorithmic}
		\State \textbf{Input} Initial value $\wh\bb^{(0)}$, a small positive value $\epsilon_\bb>0$, tuning parameter $\lambda>0$, $\delta>0$
		\Repeat
		\State Update $\wh\xi_i\kth=\xi_i(\wh\bb\kth)=y_i(x_i-\wh\bb^{(k)\rm T}\z_i)$
		\State Update $\frac{\partial \ml_{\lambda2}(\bb)}{\partial\xi_i}$ with $\xi_i=\wh\xi_i\kth$
		\State Calculate $\wh\bb^{(k+1)}$ by minimizing $h(\bb)=\ml_{\lambda1}(\bb)-\sumi\xi_i\frac{\partial \ml_{\lambda2}(\bb)}{\partial\xi_i}\bigg\vert_{\bb=\wh\bb\kth}$
		\Until $h(\wh\bb\kth)-h(\wh\bb^{(k+1)})<\epsilon_\bb$\newline
		\Return $\wh\bb_\lambda = \wh\bb^{(k+1)}$
	\end{algorithmic}
\end{algorithm}

\subsection{Theory}\label{sec:theory}

Now, we derive the asymptotic representation $\sqrt{n}(\wh\bb_\lambda-\bb^0)$. This will guide the standard error estimation of $\wh\bb_\lambda$.
Let $\xi=\xi(\bb)=Y(X-\bb\trans\Z)$.
Then $\xxi^\prime(\bb)=-Y\Z$, a $p$-dimensional vector, and
$\xxi^{\prime\prime}(\bb)=\bf{0}$, a $p\times p$ zero matrix.
For $\ml(\bb)$ defined in (\ref{objfun4}), we define
$\S(\bb)=(S(\bb)_j)$ as the $p$-dimensional gradient vector given by
\be
\S(\bb)&= & E\left\{w(Y)L_{\delta1}^\prime\left(\xi(\bb)\right)\xxi^\prime(\bb)\right\}-E\left\{w(Y)L_{\delta2}^\prime\left(\xi(\bb)\right)\xxi^\prime(\bb)\right\} \nonumber \\
&= & -\delta^{-1}E\left[\left\{l^\prime\left(\xi(\bb)/\delta\right)\ind_{\left\{0<\xi(\bb)\leq \frac\delta2\right\}}+l^\prime\left(1-\xi(\bb)/\delta\right)\ind_{\left\{\frac\delta2<\xi(\bb)\leq\delta\right\}}\right\}w(Y)Y\Z \right],
\ee
where $L_{\delta1}^\prime(u) = \delta^{-1}l^\prime(1-u/\delta) \ind_{\{\delta/2<u\leq\delta\}} - \delta^{-1}k \ind_{\{u\leq\delta/2\}}$ and $L_{\delta2}^\prime(u) = -\delta^{-1}k \ind_{\{u\leq0\}} - \{\delta^{-1}k+\delta^{-1}l^\prime(u/\delta)\} \ind_{\{0<u\leq\delta/2\}}$,
and $\H(\bb)=(H(\bb)_{jk})$ as the $p\times p$ Hessian matrix
\be
\H(\bb)&=&E\left\{w(Y)L_{\delta1}^{\prime\prime}\left(\xi(\bb)\right)\xxi^\prime(\bb)\xxi^\prime(\bb)\trans\right\}-E\left\{w(Y)L_{\delta2}^{\prime\prime}\left(\xi(\bb)\right)\xxi^\prime(\bb)\xxi^\prime(\bb)\trans\right\} \nonumber \\
&=&\delta^{-2}E\left[ \left\{l^{\prime\prime}\left(\xi(\bb)/\delta\right)\ind_{\left\{0\leq \xi(\bb)\leq\frac\delta2\right\}}-l^{\prime\prime}\left(1-\xi(\bb)/\delta\right)\ind_{\left\{\frac\delta2<\xi(\bb)\leq\delta\right\}}\right\} w(Y)\Z\Z\trans\right],
\ee
where $L_{\delta1}^{\prime\prime}(u) = -\delta^{-2}l^{\prime\prime}(1-u/\delta) \ind_{\{\delta/2<u\leq\delta\}}$ and $L_{\delta2}^{\prime\prime}(u) = -\delta^{-2}l^{\prime\prime}(u/\delta) \ind_{\{0<u\leq\delta/2\}}$.
The following regularity conditions
\begin{itemize}
	\item [(C1)] $\S(\bb)$ is well-defined, $\H(\bb)$ is well-defined and is continuous in $\bb$,
	\item [(C2)] The variable $\Z$ has finite fourth moment, i.e., $E||\Z||^4<\infty$,
	\item [(C3)] $n^{1/2}\lambda\rightarrow0$ and $n^{-1}\delta^{-4}\rightarrow0$,
\end{itemize}
are required to present our main theoretical result:
\begin{theorem}
	Under the regularity conditions (C1)-(C3) and those in Lemmas~\ref{lemma1} and \ref{theorem1}, then $\sqrt{n}(\wh\bb_\lambda-\bb^0)\rightarrow N\left\{\bf{0}, \H^{-1}(\bb^0)\G(\bb^0)\H^{-1}(\bb^0)\right\}$
	in distribution when $n\rightarrow\infty$, where $\G(\bb)=\delta^{-2}E\left[\left\{\left(l^{\prime}\left(\xi(\bb)/\delta\right)\right)^2\ind_{\left\{0<\xi(\bb)\leq \frac\delta2\right\}}+\left(l^{\prime}\left(1-\xi(\bb)/\delta\right)\right)^2\ind_{\left\{\frac\delta2<\xi(\bb)\leq \delta\right\}}\right\}w(Y)^2\Z\Z\trans\right]$.
	\label{theorem2}
\end{theorem}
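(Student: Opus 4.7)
The plan is to treat $\wh\bb_\lambda$ as a penalized M-estimator and run the standard Taylor-expansion argument, with extra care taken of the piecewise structure of $L_\delta$ and the penalty. First I would establish consistency $\wh\bb_\lambda \to \bb^0$ in probability. Lemma~\ref{theorem1} guarantees that $\bb^0$ is the unique minimizer of the population objective $\ml(\bb)$, and (C2) together with the boundedness of $L_\delta$ gives uniform convergence of $\ml_\lambda(\bb)$ to $\ml(\bb)$ on any compact set (the penalty term is $O(\lambda)$ and vanishes since $\sqrt{n}\lambda\to 0$ implies $\lambda\to 0$). Standard M-estimation arguments (argmin consistency) then yield $\wh\bb_\lambda \xrightarrow{p} \bb^0$.

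Next I would write down the first-order condition. Because $l$ is $C^2$ on $(0,1/2)$ with bounded second derivative, the empirical score
\bse
\S_n(\bb) \;=\; -\frac{1}{n\delta}\sumi w_i\Bigl\{l^\prime(\xi_i/\delta)\ind_{\{0<\xi_i\le\delta/2\}} + l^\prime(1-\xi_i/\delta)\ind_{\{\delta/2<\xi_i\le\delta\}}\Bigr\} y_i\z_i
\ese
is well-defined (note $l^\prime(0)=0$ and $l^\prime(1/2)=-k$, so although the indicators jump at $0$, $\delta/2$, $\delta$, the integrand is continuous in $\xi_i$ at $0$ and $\delta$, and the jump at $\delta/2$ contributes only on a measure-zero set in $\bb$ under any absolutely continuous distribution of $X\mid \Z$). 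The estimator satisfies $\S_n(\wh\bb_\lambda) + \lambda(0,\wh\bb_{\lambda,-1}\trans)\trans = 0$. A mean-value expansion around $\bb^0$ then gives
\bse
\sqrt{n}(\wh\bb_\lambda - \bb^0) = -[\H_n(\wt\bb)]^{-1}\bigl\{\sqrt{n}\,\S_n(\bb^0) + \sqrt{n}\lambda(0,\wh\bb_{\lambda,-1}\trans)\trans\bigr\},
\ese
where $\wt\bb$ lies between $\wh\bb_\lambda$ and $\bb^0$ and $\H_n$ is the empirical Hessian. The crucial input from the preceding lemmas is Fisher consistency: Lemma~\ref{theorem1} forces $\S(\bb^0)=0$, so $\sqrt{n}\,\S_n(\bb^0)$ is a genuinely centred sum to which the Lindeberg CLT applies, giving $\sqrt{n}\,\S_n(\bb^0) \Rightarrow N(\mathbf{0},\G(\bb^0))$ provided $\G(\bb^0)$ is finite, which follows from (C2) and the boundedness of $l^\prime$ on $[0,1/2]$.

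Finally I would show that $\H_n(\wt\bb) \xrightarrow{p} \H(\bb^0)$ and that the penalty contribution is asymptotically negligible, and conclude by Slutsky. Negligibility of the penalty is immediate: $\wh\bb_{\lambda,-1} = O_p(1)$ by consistency, and $\sqrt{n}\lambda \to 0$ by (C3). I expect the main obstacle to be the Hessian convergence, because $\H_n$ involves indicators of the events $\{0<\xi_i\le\delta/2\}$ and $\{\delta/2<\xi_i\le\delta\}$, so the summands are not Lipschitz in $\bb$; a naive uniform LLN does not apply directly. My plan here is to exploit the symmetry condition $l^{\prime\prime}(u)=l^{\prime\prime}(1/2-u)$ (which combines the two indicator pieces into a cleaner expression on $\{0<\xi_i\le\delta\}$), bound the second moment of each summand by $M\delta^{-2}\|\Z\|^2$, and use (C2) with (C3)'s rate $n^{-1}\delta^{-4}\to 0$ to control the variance of $\H_n(\bb^0)-\H(\bb^0)$ via Chebyshev; the residual $\H_n(\wt\bb) - \H_n(\bb^0)$ is then handled by an empirical-process bracketing argument on the class of indicator functions induced by half-spaces in $\bb$, using again $E\|\Z\|^4<\infty$ from (C2). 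Continuity of $\H(\bb)$ at $\bb^0$ (part of (C1)) together with consistency of $\wt\bb$ then delivers $\H_n(\wt\bb) \xrightarrow{p} \H(\bb^0)$, and one final application of Slutsky yields the stated asymptotic normality.
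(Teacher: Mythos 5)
Your route is genuinely different from the paper's. You expand the first-order condition $\S_n(\wh\bb_\lambda)+\lambda(0,\wh\bb_{\lambda,-1}\trans)\trans=0$ around $\bb^0$ and invert an empirical Hessian; the paper never forms a Hessian of the empirical objective at all. Instead it studies $\Lambda_n(\tth)=n\{\ml_\lambda(\bb^*+\tth/\sqrt n)-\ml_\lambda(\bb^*)\}$, proves the pointwise quadratic approximation $\Lambda_n(\tth)=\tfrac12\tth\trans\H(\bb^*)\tth+\W_n\trans\tth/\sqrt n+o_p(1)$ by bounding the second-order Taylor remainders of the two convex pieces $L_{\delta1},L_{\delta2}$ (a dedicated preliminary lemma giving $0\le R_j\le k|u-v|$ on the "crossing" sets and $\le M|u-v|^2$ on the interior sets, whose variance is then killed using (C2) and $n^{-1}\delta^{-4}\to0$), and finally invokes Pollard's Convexity Lemma to upgrade to uniform convergence on compacts and localize the minimizer at $\ett_n=-\H(\bb^*)^{-1}\W_n/\sqrt n$. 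Your CLT step for the centred score (using $\S(\bb^*)=0$ from Fisher consistency) and your handling of the penalty via $\sqrt n\lambda\to0$ match the paper's.

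The gap in your version sits exactly where the paper does its real work. First, $\S_n$ is continuous but only piecewise $C^1$ in $\bb$: the segment from $\bb^0$ to $\wh\bb_\lambda$ generically crosses order-$n$ hyperplanes on which some $\xi_i$ hits a breakpoint $0,\delta/2,\delta$, so a mean-value identity $\S_n(\wh\bb_\lambda)-\S_n(\bb^0)=\H_n(\wt\bb)(\wh\bb_\lambda-\bb^0)$ with a single intermediate point is not available; at best you get the integral form $\int_0^1\H_n(\bb^0+t(\wh\bb_\lambda-\bb^0))\,dt$. Second, and more seriously, the uniform convergence of this discontinuous empirical Hessian over a shrinking neighbourhood is the hard analytic content of the theorem, and you defer it to "a bracketing argument on indicators of half-spaces"; the relevant events are slabs $\{0<y_i(x_i-\bb\trans\z_i)\le\delta/2\}$ rather than half-spaces, the summands carry a factor $\delta^{-2}$, and the rate condition $n^{-1}\delta^{-4}\to0$ in (C3) has to be consumed precisely at this step. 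Until that is carried out the argument is a plan rather than a proof; the paper's remainder-bounding-plus-convexity-lemma device exists specifically to avoid it. A smaller issue: your consistency step infers argmin consistency from uniform convergence "on any compact set" without first confining $\wh\bb_\lambda$ to a compact set, whereas the paper obtains tightness and the limit law simultaneously from the localization $\pr(\|\sqrt n(\wh\bb_\lambda-\bb^*)-\ett_n\|<\varepsilon)\to1$ and needs no separate consistency argument.
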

The most challenging part in the proof of this result is to address the non-differentiability of $\ml_\lambda(\bb)$, much more complicated than the linear support vector machine case \citep{koo2008bahadur}.

Consequently, if we have a new patient with clinical profile $\z_0$, the asymptotic representation of her/his iMCID value $\tau(\z_0;\wh\bb_\lambda) = \wh\bb_\lambda\trans\z_0$ will follow:
\begin{corollary}
	Under the same conditions given by Theorem~\ref{theorem2}, we have $	\sqrt{n}\{\tau(\z_0;\wh\bb_\lambda)-\tau(\z_0;\bb^0)\}\rightarrow N\left\{\boldsymbol{0},\z_0\trans \H^{-1}(\bb^0)\G(\bb^0)\H^{-1}(\bb^0)\z_0\right\}$
	in distribution when $n\rightarrow\infty$.
	\label{corollary1}
\end{corollary}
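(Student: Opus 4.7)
The plan is to deduce Corollary~\ref{corollary1} as a direct linear-functional consequence of Theorem~\ref{theorem2}. Since $\tau(\z;\bb)=\bb\trans\z$ is linear in $\bb$ with $\z_0$ nonrandom (a fixed future covariate vector), we have the exact identity
\begin{equation*}
\sqrt{n}\{\tau(\z_0;\wh\bb_\lambda)-\tau(\z_0;\bb^0)\} \;=\; \z_0\trans\,\sqrt{n}(\wh\bb_\lambda-\bb^0).
\end{equation*}
Thus the limiting distribution on the left is obtained by pushing forward the joint limit on the right through the continuous linear map $\bb\mapsto\z_0\trans\bb$.

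First, I would invoke Theorem~\ref{theorem2}, which under the regularity conditions (C1)--(C3) and the assumptions of Lemmas~\ref{lemma1}--\ref{theorem1} gives
$\sqrt{n}(\wh\bb_\lambda-\bb^0)\stackrel{d}{\to} N\{\bf{0},\H^{-1}(\bb^0)\G(\bb^0)\H^{-1}(\bb^0)\}$.
Next, I would apply the standard fact that if $\sqrt{n}(\wh\bb_\lambda-\bb^0)\stackrel{d}{\to}N(\bf{0},\boldsymbol{\Sigma})$ for $\boldsymbol{\Sigma}=\H^{-1}(\bb^0)\G(\bb^0)\H^{-1}(\bb^0)$, then for any fixed vector $\z_0\in\mathcal{R}^p$ the continuous mapping theorem (equivalently, Slutsky's lemma applied to the deterministic map $\bb\mapsto\z_0\trans\bb$) yields
\begin{equation*}
\z_0\trans\sqrt{n}(\wh\bb_\lambda-\bb^0)\;\stackrel{d}{\to}\;N\left(0,\;\z_0\trans\boldsymbol{\Sigma}\z_0\right).
\end{equation*}
Combining the two displays gives exactly the claimed asymptotic normality with variance $\z_0\trans\H^{-1}(\bb^0)\G(\bb^0)\H^{-1}(\bb^0)\z_0$.

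There is essentially no obstacle here: the iMCID at $\z_0$ is a linear scalar functional of $\bb$, so no delta-method linearization is required (the Jacobian is simply $\z_0$), and the positive semidefiniteness of the sandwich covariance guarantees the limiting variance is nonnegative. The only small caveat I would flag is that $\z_0$ must be treated as a fixed (non-random) evaluation point, or alternatively as independent of the training sample and conditioned upon; if one later wanted prediction intervals that account for randomness in $\z_0$ itself, an additional layer would be needed, but that is outside the scope of this corollary. All the heavy lifting---the non-differentiability of $\ml_\lambda(\bb)$, control of the remainder term, and verification that the Hessian and gradient behave as required---has already been absorbed into Theorem~\ref{theorem2}.
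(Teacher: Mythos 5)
Your proposal is correct and matches the paper's treatment: the paper presents Corollary~\ref{corollary1} as an immediate consequence of Theorem~\ref{theorem2} (introduced with ``Consequently,\ldots'' and given no separate proof in the appendix), relying exactly on the linearity $\sqrt{n}\{\tau(\z_0;\wh\bb_\lambda)-\tau(\z_0;\bb^0)\}=\z_0\trans\sqrt{n}(\wh\bb_\lambda-\bb^0)$ and the continuous mapping theorem. Your added remark that $\z_0$ must be a fixed evaluation point is a sensible clarification but does not change the argument.
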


\section{Simulation Studies}\label{sec:simu}

We conduct some simulation studies to evaluate the finite sample performance of our proposed method, when the data generation process is known.

For pMCID, our purpose is to examine the accuracy and precision of the point estimate, as well as the coverage probability of the interval estimate. We generate the random sample $\{(x_i,y_i): i=1,\cdots,n\}$ as follows.\\
(P1): We first generate $y_i$ from Bern$(\pi)$ with $\pi=0.5$. Then, we generate $x_i$ from $N(a,0.1^2)$ if $y_i=1$ whereas $N(b,0.1^2)$ if $y_i=-1$, where $a=0.2$ and $b=-0.1$. The sample size $n$ ranges from 600 to 1800.

We follow the algorithm and the theory presented above. To be more specific, we use the quadratic surrogate loss $L^q_\delta$ to approximate the zero-one loss, and we set $\delta=0.1$ to optimize the performance of our proposed method. We employ a grid search method with set $\{10^{(s-31)/10}: s=1,\cdots,31\}$ to determine the tuning parameter $\lambda$ by minimizing the Euclidian distance between the true parameter and its estimator. The results are based on 500 simulation replicates.

Table~\ref{tab:sim} summarizes the results for scenario (P1). The conclusions are quite clear. The sample bias of the estimated pMCID is very close to zero. The standard error, computed using the asymptotic variance following Theorem~\ref{theorem2}, is very close to the standard deviation, computed via Monte Carlo approximation. The coverage probability is very close to its nominal level 95\%.

For iMCID, we consider the following three scenarios with random sample $\{(x_i, y_i, \z_i): i=1,\ldots,n\}$, and the sample size $n$ ranges from 100 to 1500.\\
(I1): Let $\z_i=(1,z_{1i})\trans$ and generate $z_{1i}$ from $N(1,0.1^2)$. Generate $y_i$ from Bern$(\pi)$ with $\pi=0.5$. Generate $x_i$ from $N(\a\trans\z_i,0.1^2)$ if $y_i=1$ whereas $N(\b\trans\z_i,0.1^2)$ if $y_i=-1$, where $\a=(0.1,0.55)\trans$ and $\b=(-0.1,0.45)\trans$.\\
(I2): Let $\z_i=(1,z_{1i},z_{2i})\trans$ and generate $(z_{1i},z_{2i})\trans$ from bivariate normal distribution with mean $(1,1)\trans$ and variance $0.1^2I_2$. Generate $y_i$ from Bern$(\pi)$ with $\pi=0.5$. Generate $x_i$ from $N(\a\trans\z_i,1)$ if $y_i=1$ whereas $N(\b\trans\z_i,1)$ if $y_i=-1$, where $\a=(0.05,0.55,1.05)\trans$ and $\b=(-0.05,0.49,0.95)\trans$.\\
(I3): Similar to (I1) except that $z_{1i}$ is generated from $\{1,2\}$ with equal probability.

Note that there are two slightly different approaches to handle scenario (I3). One approach (subgroup analysis) is to separately compute pMCID for each subgroup (determined by the value of $z_{1i}$). This is the same as the treatment-specific MCID in Section~\ref{sec:data}. The other (pooled analysis) is to compute iMCID with a linear structure and then the MCID for each subgroup is derived from the estimated parameter values. This is the same as the Model 1 in Section~\ref{sec:data}. The purpose of scenario (I3) is to examine the differences between these two approaches.

The results for scenarios (I1)-(I3) are summarized in Figures~\ref{fig:sim1}-\ref{fig:sim31}, respectively.
We can see that, from Figures~\ref{fig:sim1} and \ref{fig:sim2}, when the sample size is relatively small, say, 100--200, the coverage probability for the parameter estimates in iMCID is about 80--90\%, a bit below the nominal level 95\%. When we increase the sample size to 500, the coverage probability is well close to the nominal level.
If the covariate $\Z$ is all categorial, from Figure~\ref{fig:sim31},
the subgroup analysis is a bit more accurate than the pooled analysis especially with a limited sample size, but the difference is almost negligible.

\section{Analysis of ChAMP Trial}\label{sec:data}

The ChAMP trial \citep{bisson2015design} is a randomized controlled trial with primary aim to determine whether there exists a significant difference in post-operative knee pain between patients undergoing debridement versus observation of chondral lesions.
In ChAMP, patients with chondral lesions deemed to be significant (Outerbridge Grade II-IV) were randomly allocated to debridement (deb $n=98$) or observation (obs $n=92$). A patient-reported outcome, the WOMAC pain score improvement (the greater the better) from baseline to one year after the surgery, serves as the primary outcome of the trial.
The summary statistics are $26.2\pm16.9$ (mean$\pm$standard deviation) versus $26.6\pm16.6$ (deb versus obs),
and there is no statistical significance that can be detected by using two-sample comparison procedures \citep{bisson2017patient, bisson2018does}.
Thus, there is no statistical evidence to debride the chondral lesions since it would not significantly improve the patient's post-operative quality of life.

In this paper, we re-analyze the ChAMP trial from the perspective of assessing clinical importance.
We denote the baseline WOMAC pain score as $X_\pre$ and WOMAC pain score at one year after surgery as $X_\post$, both of which are scaled from 0 (extreme pain) to 100 (no pain). We encode $X=X_\post - X_\pre$ as the WOMAC pain score improvement, the greater the better.
We adopt the anchor based method that
essentially compares $X$ with an external PRO, which is usually expressed as a binary variable $Y$ derived from the anchor question in the sense that $Y=1$ if patient has a positive response on the change of her/his health status and $Y=-1$ otherwise.
A popular anchor question could be \emph{Overall do you feel an improvement after receiving the surgery/treatment?} The ChAMP trial used the second question of SF-36 survey \citep{angst2001smallest} as the anchor question.
In ChAMP trial, patients' demographic characteristics such as gender and age, and their surgical data including the location and type of meniscal tear, and the grade of chondral lesion were recorded.
We denote patient's prognostic profile
$\Z=\{Z_1, Z_2, Z_3, Z_4, Z_5\}$ with treatment assignment (binary), age (continuous), gender (binary), severity of knee damage at baseline (four-level categorical), and the BMI (continuous).
After excluding missing values, there are $156$ patients with $70$ responded positively on their health status and $86$ negatively.

We first report the results for population level clinical importance, followed by treatment specific clinical importance.
We also show that how to build a general model based on the methods we propose in Section~\ref{sec:method}.

\subsection{Population Level Clinical Importance}\label{sec:datapopulation}

Firstly, a simple application of the method proposed in Section~\ref{sec:method} is not to include variable $\Z$, which results in the population level clinical importance.
Using our proposed method, the estimator for MCID is $18.275$.
This estimator means, regarding the whole population, once the WOMAC pain score improvement is greater than $18.275$, it can be claimed as clinically important.
In our calculations, the tuning parameter $\delta$ is selected as 0.5SD of the WOMAC pain score change (with SD $16.7$).
We alter the $\delta$ value ranging from 0.25SD to 0.75SD as a sensitivity analysis, and the results do not change much.

From above, the (absolute value of) mean difference (MD) between deb and obs is $0.4$ .
The previous literature \citep{johnston2010improving} provided a rule of thumb for the interpretation of MD/MCID, where MD/MCID$>$1 means many patients gain important benefits, 0.5$<$MD/MCID$<$1 stands for an appreciable number of patients gain benefit, and MD/MCID$<$0.5 represents that it is progressively less likely that an appreciable number of patients will achieve important benefits. Clearly, the knowledge of MCID here would not alter the conclusion that there is no need to debride the chondral lesions during the surgery.

\subsection{Treatment Specific Clinical Importance}\label{sec:datatreatment}

Furthermore, to study the MCID in each of the treatment groups, we compute the MCID within deb group (treatment$=$1) and obs group (treatment$=$0), respectively.
To make a comparison, we also employ an ad-hoc method \citep{jaeschke1989measurement} and a method based on the receiver operating characteristic (ROC) curve.
The results are shown in Table~\ref{tab:app11}.
From Table~\ref{tab:app11}, the ad-hoc method and the ROC method give a much larger estimate of MCID, hence might be too conservative.
Also, although the SE from the ad-hoc method looks reasonable, those from the ROC method are too large hence might not be reliable.

Although from Table~\ref{tab:app11} the treatment-specific MCID is informative per se, the comparison between the two groups is not clear. Next, we randomly split the whole data set ($n$=156) into training data ($n$=78) and testing data ($n$=78). In the deb group, we use the training data ($n$=40) to compute the MCID (14.229),
then compare with the WOMAC pain score improvement in the testing data, and find out that 80.0\% (32/40) are greater than this MCID. Similarly, the MCID for the obs group in the training data ($n$=38) is 18.144, and 65.8\% (25/38) in the testing data are greater than this MCID.
Consider the hypothesis
\bse
H_0: p_{\deb}\leq p_{\obs} \mbox{ versus } H_a: p_{\deb}>p_{\obs},
\ese
where $p_{\deb}$ and $p_{\obs}$ represent the proportion of patients whose WOMAC pain score improvement is greater than the MCID in the debridement group and the observation group, respectively.
Note that, one needs to obtain the evidence that the debridement group has a larger, instead of a merely different, proportion to claim the clinical importance of debriding the chondral lesions; therefore, the one-sided hypothesis testing is conducted.
It can be computed that the corresponding test statistic is $1.414$, p-value is $0.079$, and $95\%$ CI for difference in proportion is $[-0.053, 0.337]$.
Hence, we can conclude that there is no significant difference in the proportion of patients who achieve the clinical importance between deb and obs groups.
Again, the knowledge of treatment-specific MCID here would still not alter the conclusion that there is no need to debride the chondral lesions during the surgery.

\subsection{General Model Building}\label{sec:datageneral}

In addition, it is of interest to consider how to extend the treatment specific analysis to a more general model setting where multiple variables could be incorporated simultaneously.
To this end, we model the MCID with a linear structure and estimate the associated parameters following the method proposed in Section~\ref{sec:method}.
In these computations,
the tuning parameters $\delta$ and $\lambda$ are selected simultaneously from the pre-specified sets using 5-fold cross validation by minimizing the empirical version of the value function in (\ref{objfun3}).

We first fit Model 1 with only the treatment variable. Table~\ref{tab:app2} shows that its corresponding interval estimate covers zero hence there is no significant difference between the two treatment-specific MCIDs. This result is consistent with our previous result about the treatment specific clinical importance.
Then we fit Model 2 with $\Z$=\{treatment, age, gender, severity, BMI\}. Table~\ref{tab:app2} shows that, even after adjusting for the effects of age, gender, severity and BMI, there is still no significant difference between the two treatment-specific MCIDs.
Interestingly, in Model 2, the effects of age, gender, severity and BMI are all significant.
This result shows, a more senior patient, or a patient with a larger BMI, or a female patient (compared to male),  would generally need a greater threshold of pain improvement (hence harder to attain) to be claimed as clinically important.
Also,
the coefficient for severity is significantly smaller than zero. Its interpretation is that the more severely diseased patients would need a smaller threshold (hence easier to attain) to be claimed as clinically important.
This matches with our intuition. It is also consistent with prior work \citep{heald1997shoulder} which claimed that patients with more severe symptoms at baseline require less pain reduction to achieve clinical importance.


Finally, we conduct some sensitivity analyses on the values of $\lambda$ and $\delta$ in Model 2, shown in Figures~\ref{fig:app1} and \ref{fig:app2}.
Figure~\ref{fig:app1} demonstrates that, as $\lambda$ increases, all parameter estimates approach to zero. In particular, around a reasonable length of its optimal value, the parameter estimates do not vary much.
Similarly, Figure~\ref{fig:app2} also illustrates that the parameter estimates do not change drastically when the tuning parameter $\delta$ ranges from $0.1$ to $1$.

\section{Discussion}\label{sec:disc}

In this paper, we attempt to evaluate a randomized controlled trial from the perspective of assessing clinical importance, by proposing a novel principled statistical learning framework to estimate the magnitude of the minimal clinically important difference as well as to quantify its uncertainty. Our analysis result on the ChAMP trial reinforces the conclusion solely based on statistical significance, in the sense that the effect of debriding the chondral lesions during the knee surgery is not only statistically non-significant, but also clinically un-important. 


In terms of methodology, we concentrate on the investigation of MCID with a linear structure mainly for its ease of interpretation.
Technically, once the asymptotic representation for the unknown parameters is established,
that for the MCID can be straightforwardly derived. The MCID with a more flexible structure, such as semiparametric or even nonparametric, is certainly more ideal, but the corresponding technical derivation might be more challenging.

We also would like to mention that, although our motivation of this work as well as our case study is in orthopaedics with PROs derived from the WOMAC pain score and the SF-36 survey, the general framework proposed in this paper can be applied likewise to other disciplines with appropriate PRO instruments, such as lupus \citep{rai2015approaches}, cancer \citep{hui2018minimal}, Alzheimer's disease \citep{andrews2019disease}, among many others.

\begin{table}[htbp]
	\centering
	\caption{Simulation study (P1): the sample bias, sample standard deviation, estimated standard error, and coverage probability of 95\% confidence interval for the estimator of pMCID.
	}
	\label{tab:sim}
	\begin{tabular}{crrrrr}
		\hline\hline
		\multirow{2}{*}{Measure} & \multicolumn{5}{c}{Sample Size} \\
		& \multicolumn{1}{c}{600} & \multicolumn{1}{c}{900} & \multicolumn{1}{c}{1200} & \multicolumn{1}{c}{1500}  & \multicolumn{1}{c}{1800}  \\
		\hline
		Bias & 0.000 & 0.000 & 0.000 & 0.000 & 0.000 \\
		Standard Deviation (SD) & 0.008 & 0.007 & 0.006 & 0.005 & 0.005 \\
		Standard Error (SE) & 0.011 & 0.008 & 0.007 & 0.006 & 0.006 \\
		Coverage Probability (CP) & 0.950  & 0.962  & 0.966  & 0.972 & 0.956 \\
		\hline\hline
	\end{tabular}
\end{table}

\begin{figure}[htbp]
	\centering
	\includegraphics[scale=0.18]{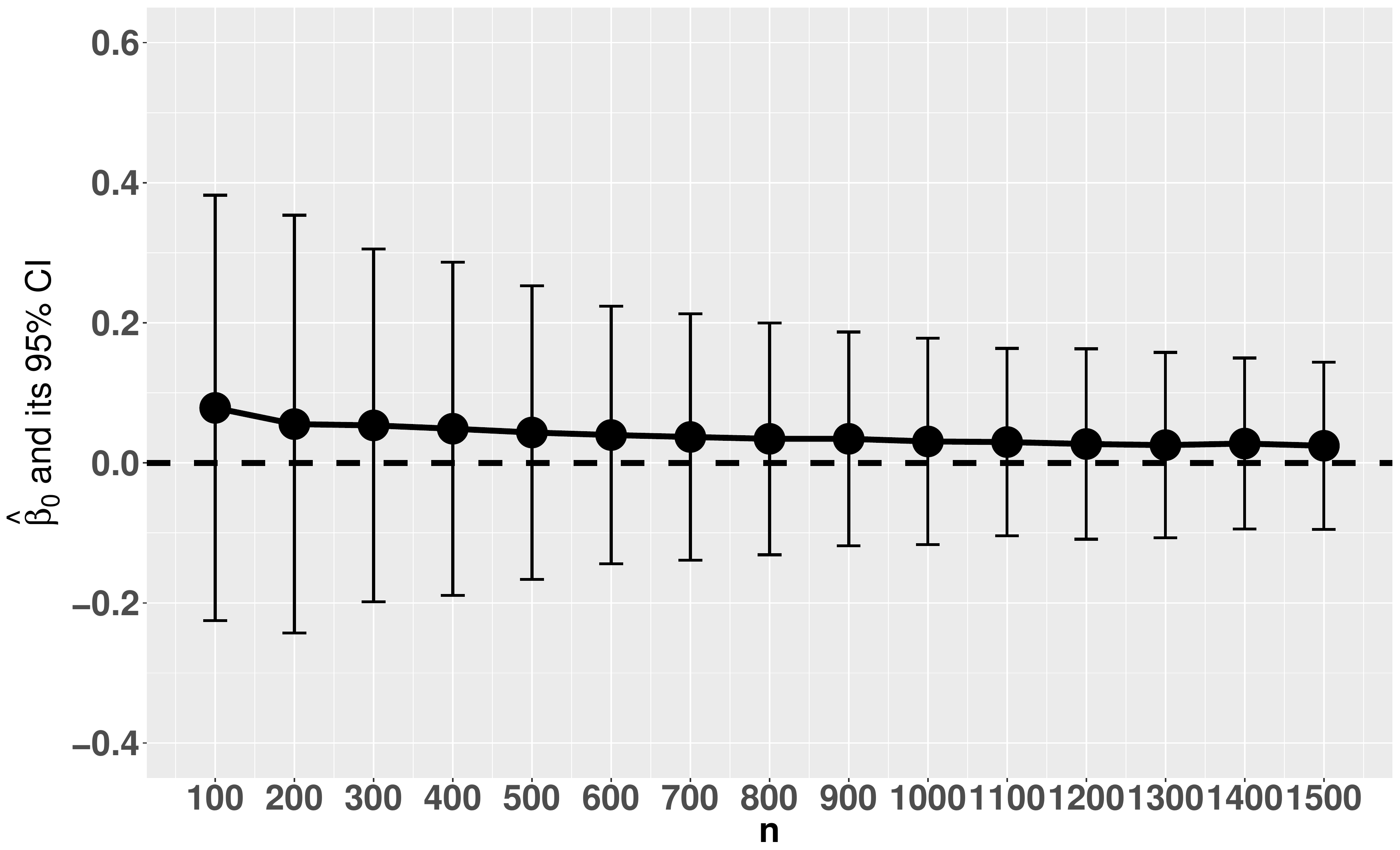}
	\includegraphics[scale=0.18]{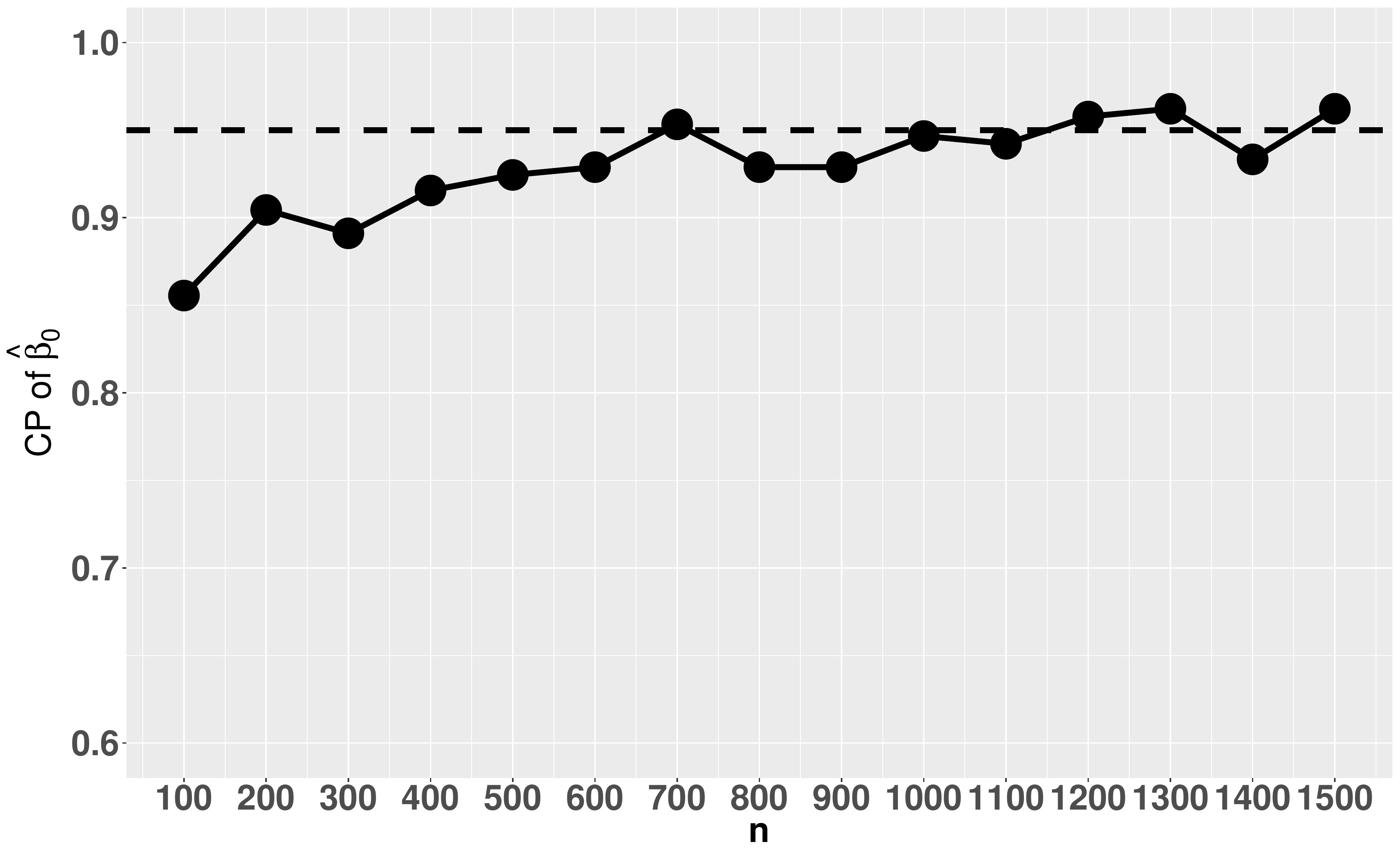}
	\includegraphics[scale=0.18]{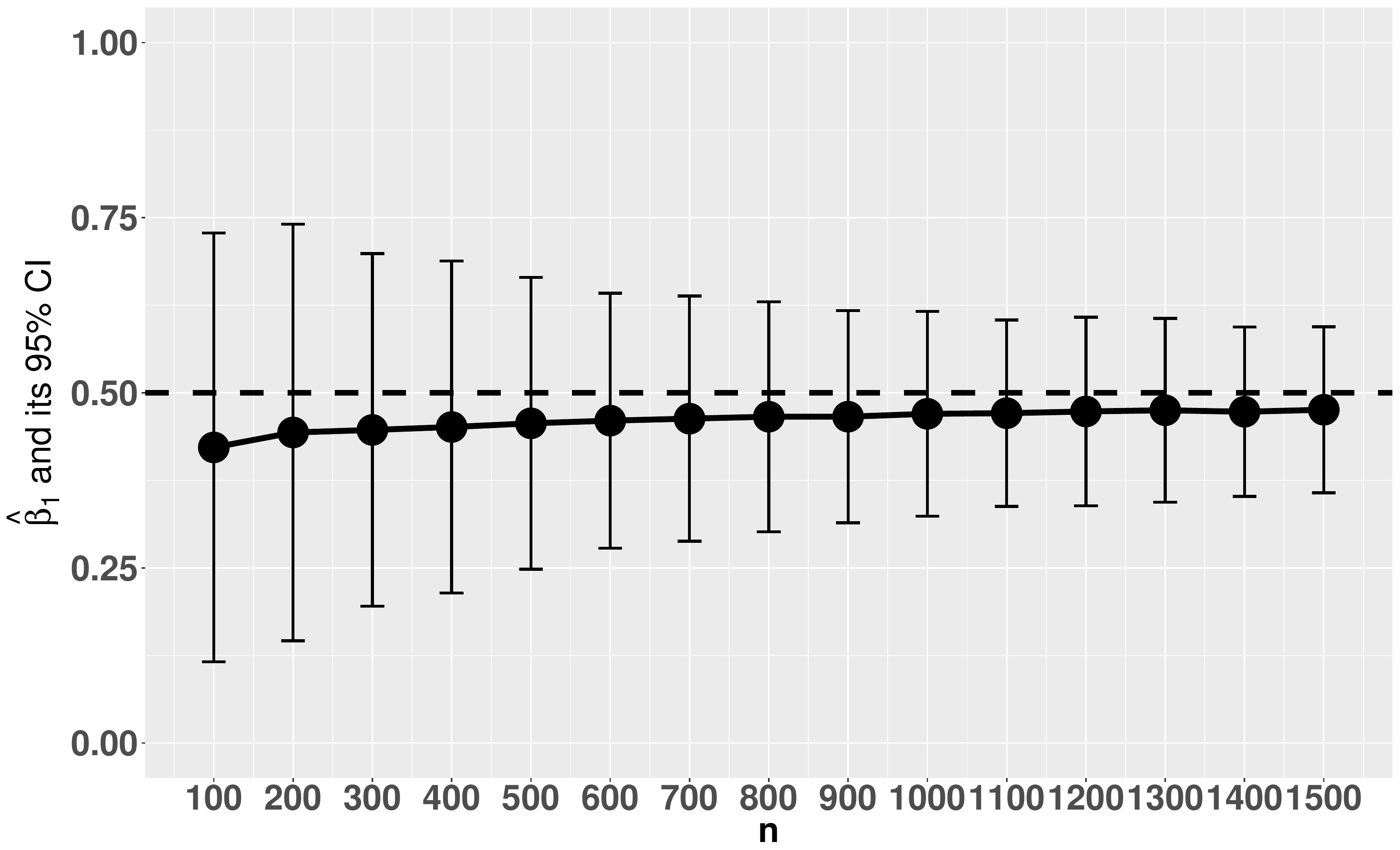}
	\includegraphics[scale=0.18]{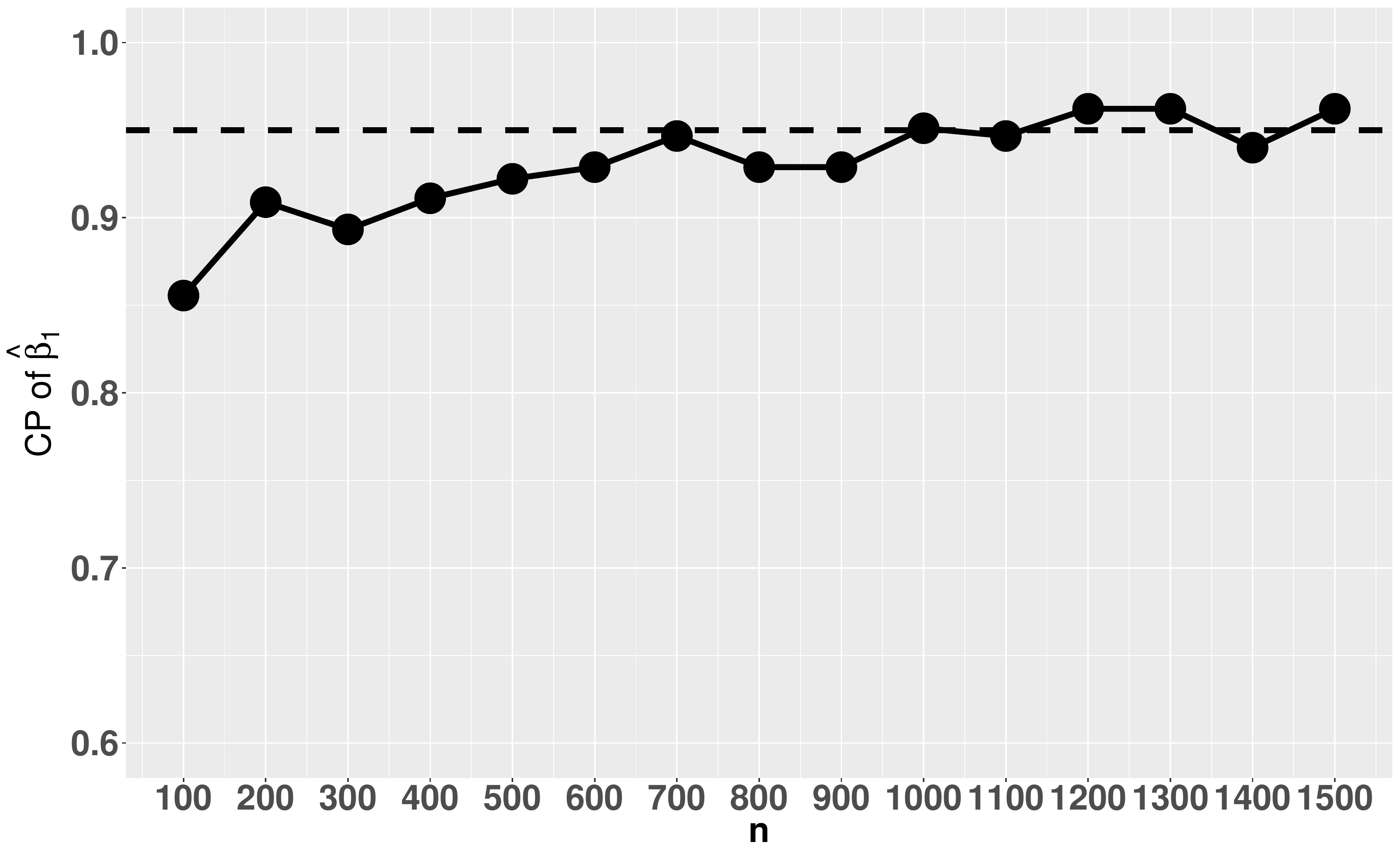}
	\caption{Simulation study (I1): the point estimator, 95\% confidence interval and its corresponding coverage probability for each of the parameters in iMCID.
	}
	\label{fig:sim1}
\end{figure}

\begin{figure}[htbp]
	\centering
	\includegraphics[scale=0.18]{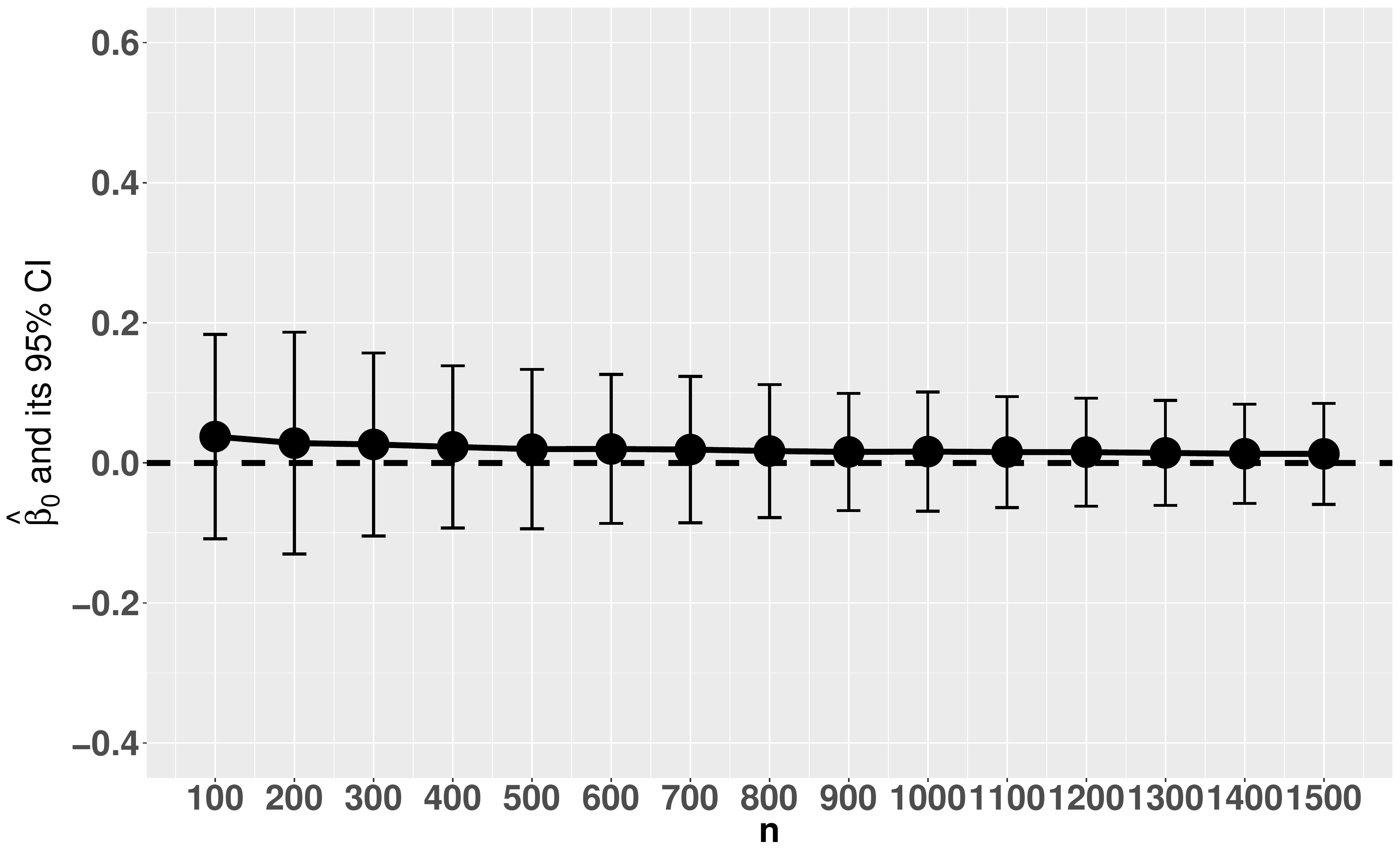}
	\includegraphics[scale=0.18]{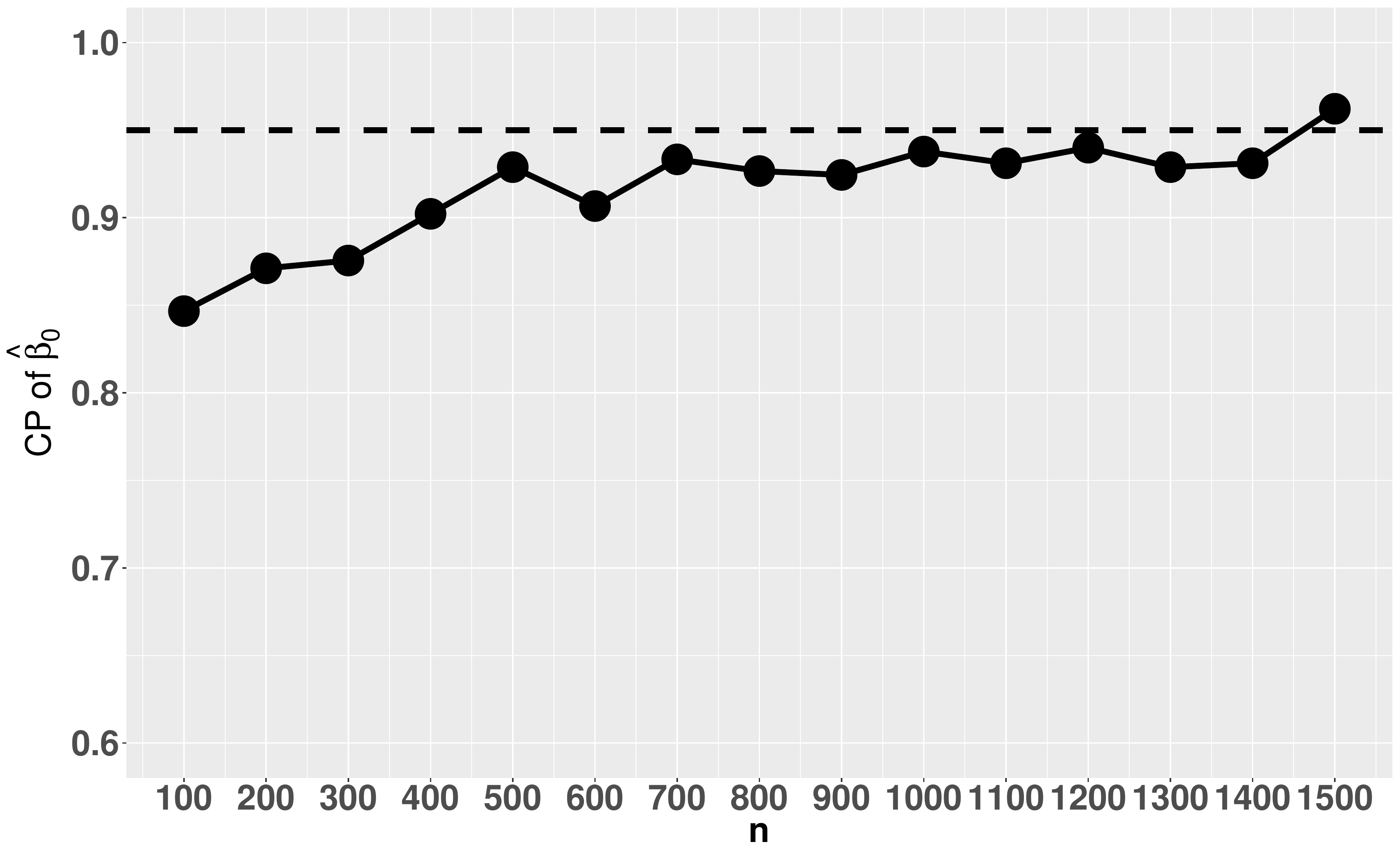}
	\includegraphics[scale=0.18]{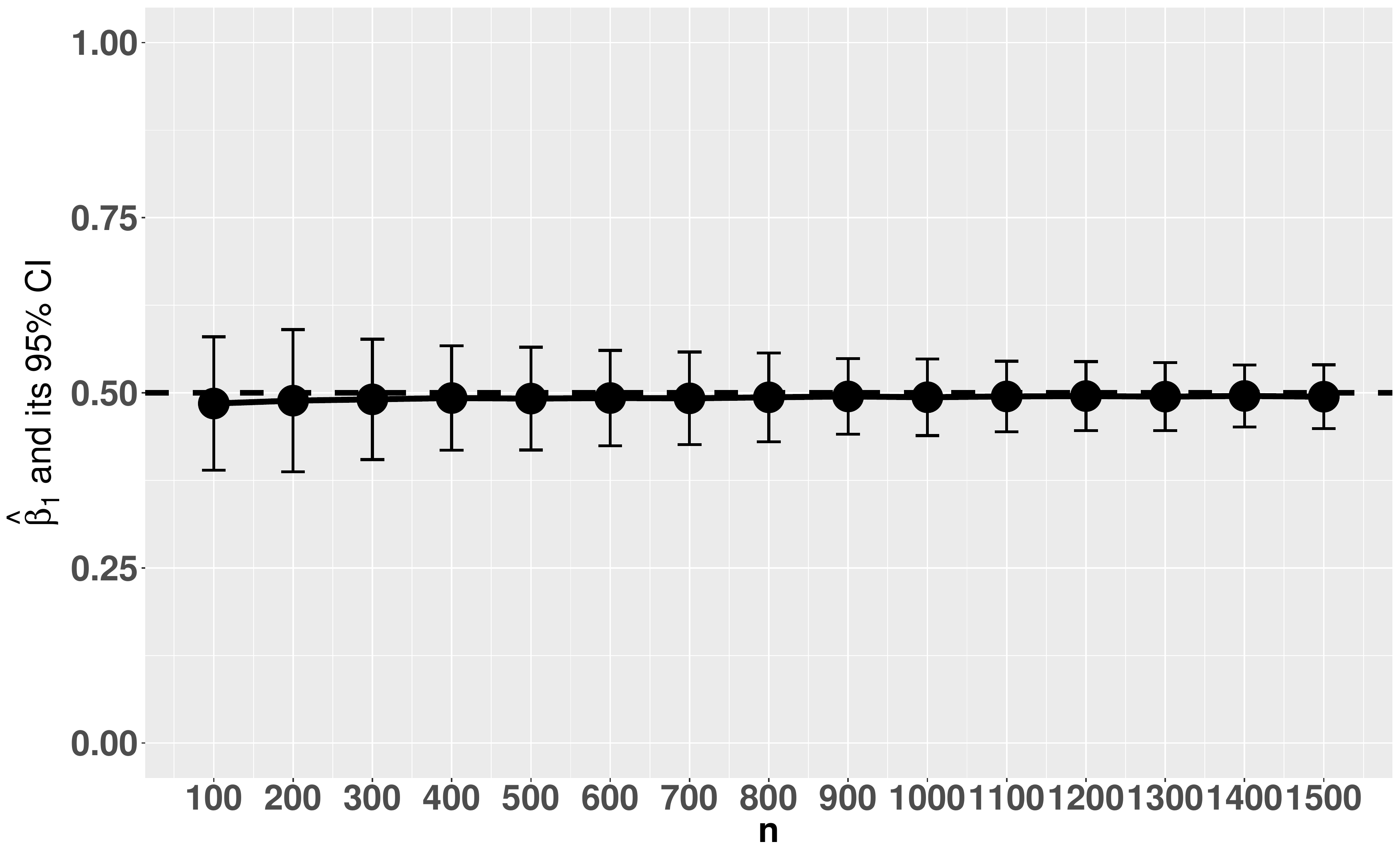}
	\includegraphics[scale=0.18]{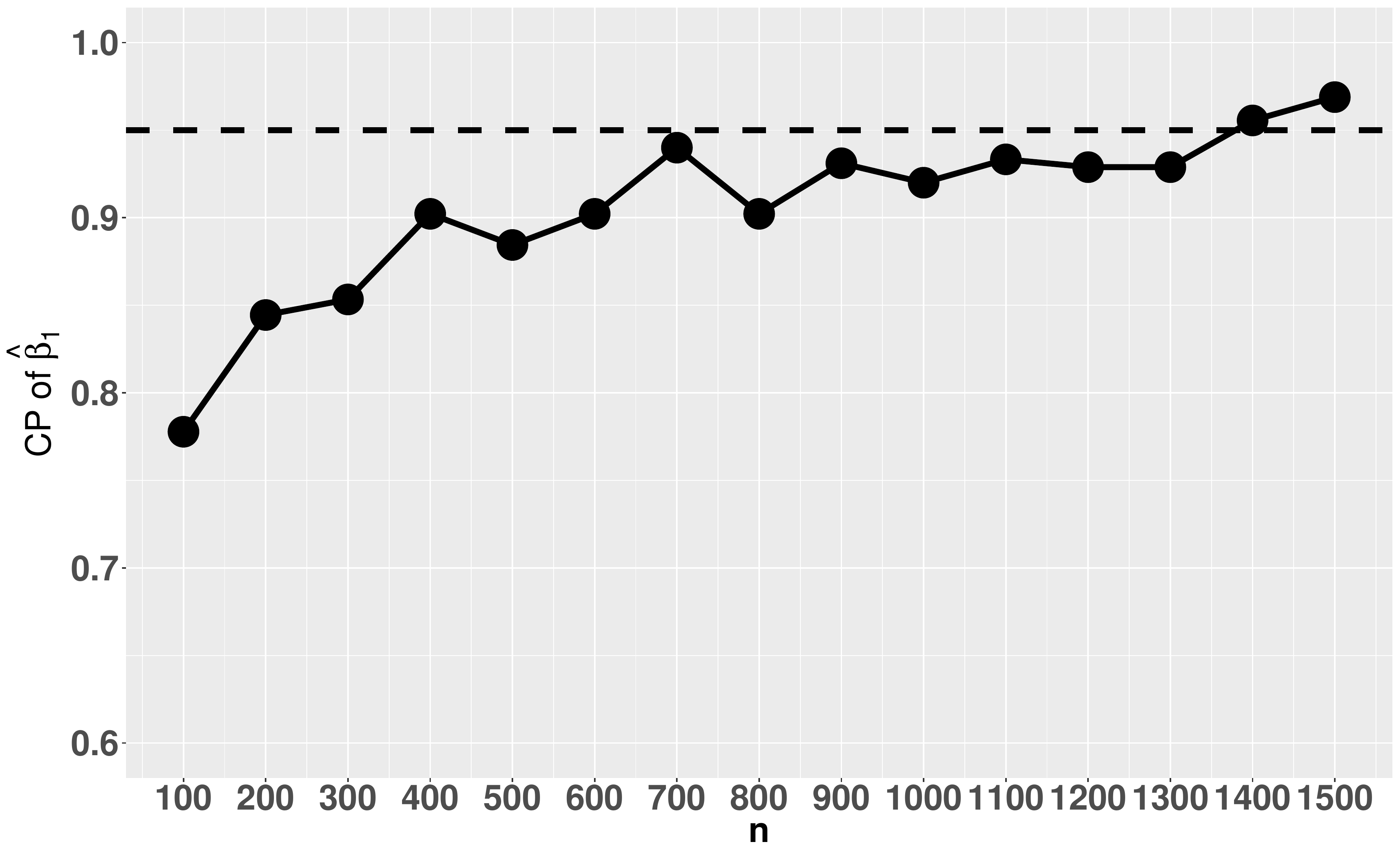}
	\includegraphics[scale=0.18]{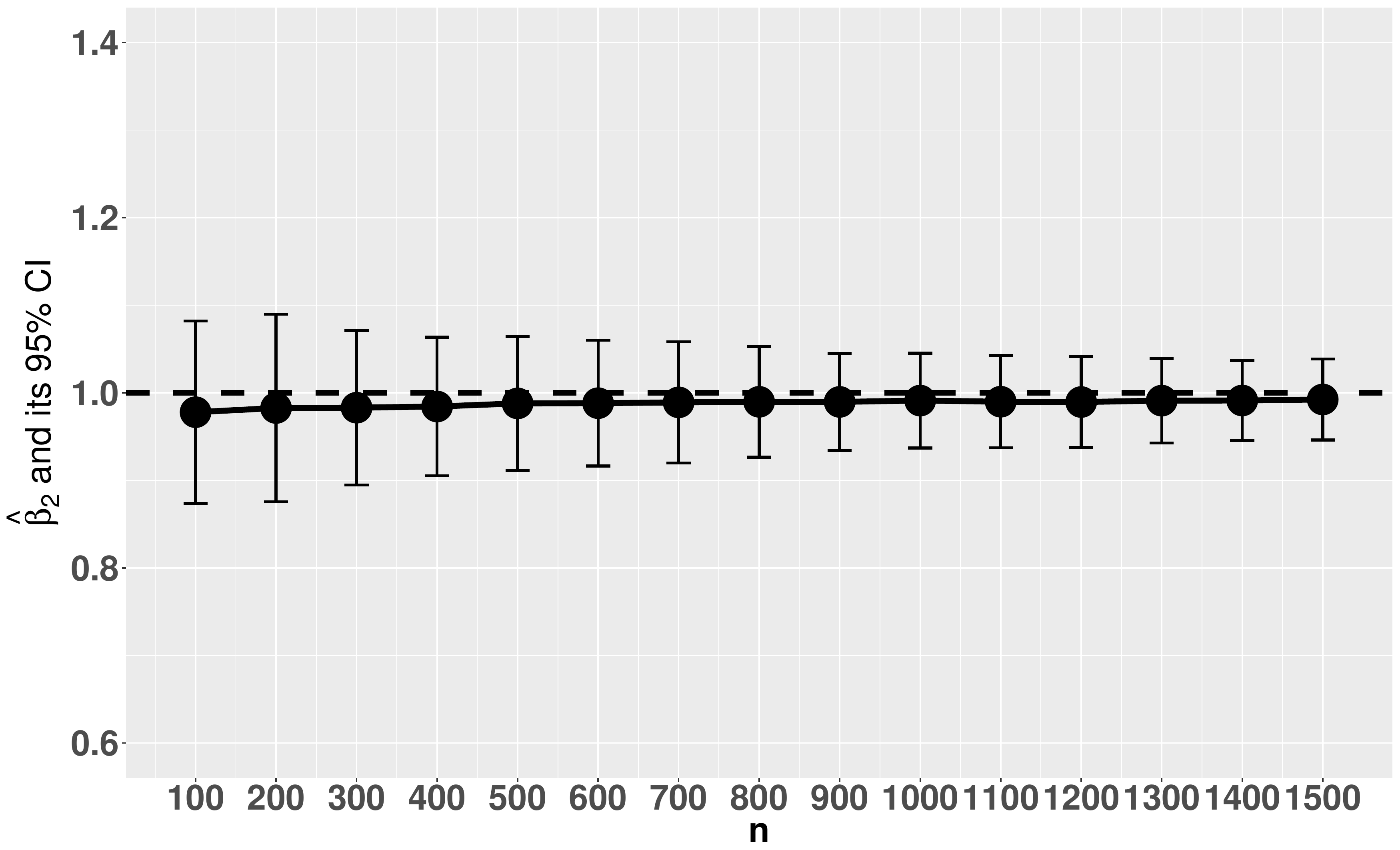}
	\includegraphics[scale=0.18]{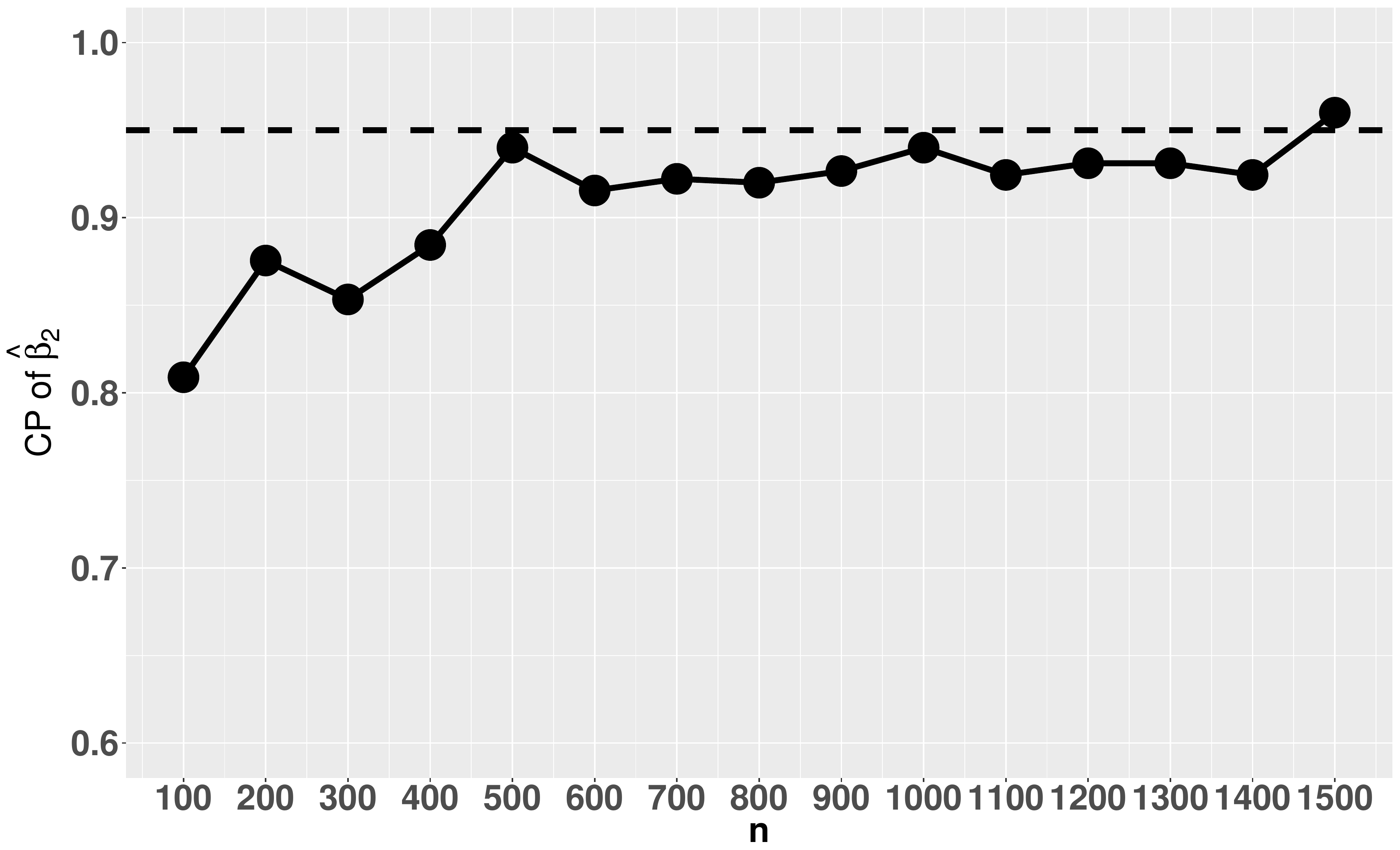}
	\caption{Simulation study (I2): the point estimator, 95\% confidence interval and its corresponding coverage probability for each of the parameters in iMCID.}
	\label{fig:sim2}
\end{figure}

\begin{figure}[htbp]
	\centering
	\includegraphics[scale=0.25]{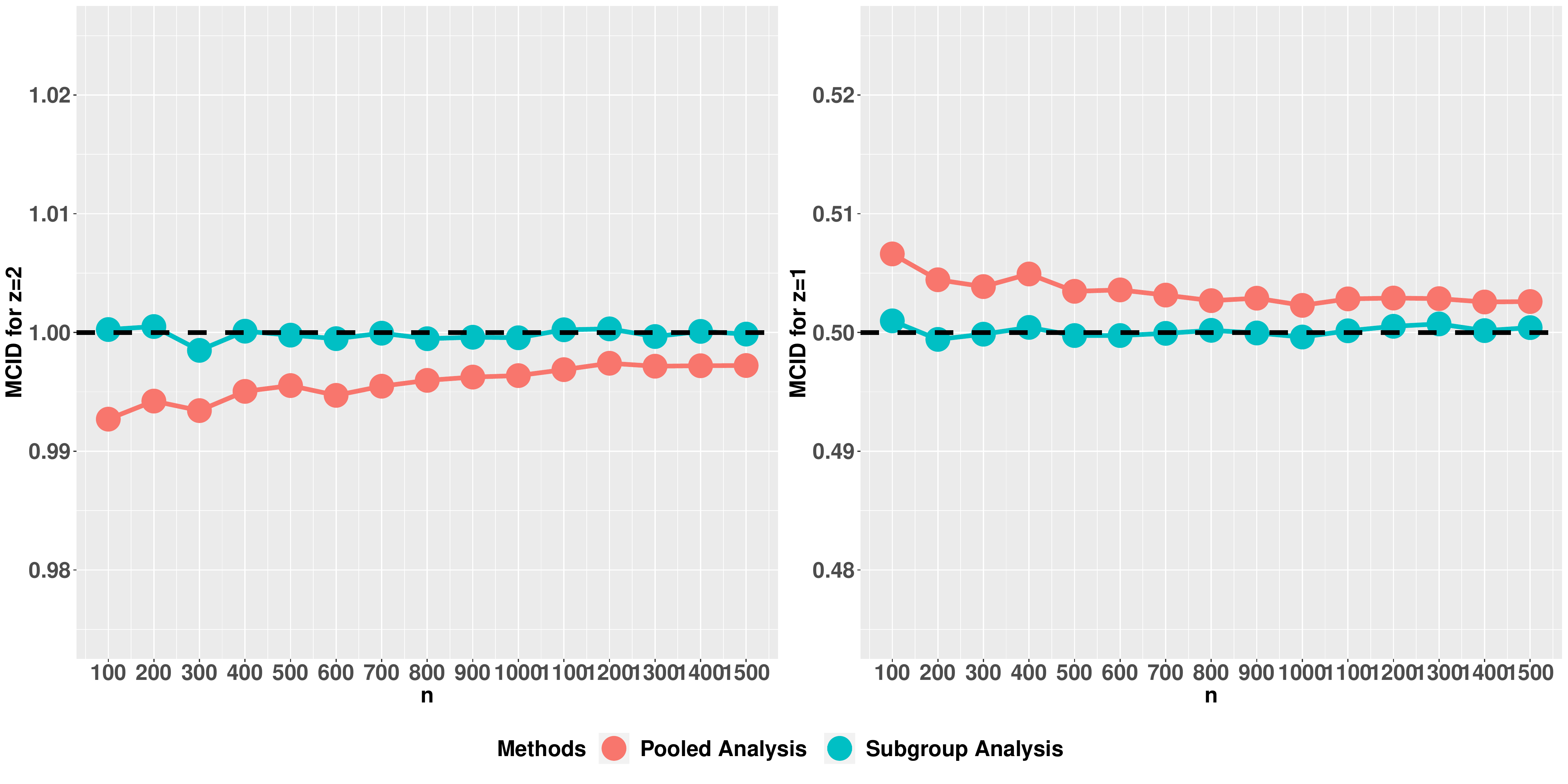}
	\caption{Simulation study (I3): the point estimator of the MCID using the subgroup analysis versus the pooled analysis.}
	\label{fig:sim31}
\end{figure}

\begin{table}[htbp]
	\centering
	\caption{In ChAMP trial application, the comparison results for the MCID, including the estimate, the standard error (SE) and the $95\%$ confidence interval (CI), for the whole population, the deb group (treatment=1), and the obs group (treatment=0), respectively.}
	\label{tab:app11}
	\begin{tabular}{ccrrrr}
		\hline\hline
		\multicolumn{1}{c}{\multirow{2}{*}{Method}} & \multicolumn{1}{c}{\multirow{2}{*}{Data Used}} & \multicolumn{1}{c}{\multirow{2}{*}{Estimate}} & \multicolumn{1}{c}{\multirow{2}{*}{SE}} & \multicolumn{2}{c}{$95\%$ CI} \\
		& & & & \multicolumn{1}{c}{Lower} & \multicolumn{1}{c}{Upper} \\
		\hline
		\multirow{3}{*}{Proposed}& All & 18.275 & 1.059 & 16.199 & 20.352 \\	
		& Treatment$=1$ & 17.945 & 1.485 & 15.035 & 20.856 \\
		& Treatment$=0$ & 18.598 & 8.979 & 1.000 & 36.195 \\
		\hline
		\multirow{3}{*}{Ad-hoc} & All & 29.718 & 2.031 & 25.668 & 33.768 \\
		& Treatment$=1$ & 28.438 & 3.348 & 21.608 & 35.267  \\
		& Treatment$=0$ & 30.769 & 2.502 & 25.704 & 35.834 \\
		\hline
		\multirow{3}{*}{ROC}& All & 22.500 & 10.013 & 7.500 & 52.500 \\
		& Treatment$=1$ & 7.500 & 18.516 & 2.500 & 57.500  \\
		& Treatment$=0$ & 22.500 & 6.686 & 12.500 & 42.500  \\
		\hline\hline
	\end{tabular}
\end{table}

\begin{table}[htbp]
	\centering
	\caption{In ChAMP trial application, the results for the parameters in iMCID with linear structure, including the estimate, the standard error (SE) and the $95\%$ confidence interval (CI). In Model 1, $Z$=treatment. In Model 2, $\Z$=\{treatment, age, gender, severity, BMI\}.}
	\label{tab:app2}
	\begin{tabular}{crrrr}
		\hline\hline
		\multicolumn{1}{c}{\multirow{2}{*}{Parameter}} & \multicolumn{1}{c}{\multirow{2}{*}{Estimate}} & \multicolumn{1}{c}{\multirow{2}{*}{SE}} & \multicolumn{2}{c}{$95\%$ CI} \\
		& & & \multicolumn{1}{c}{Lower} & \multicolumn{1}{c}{Upper} \\
		\hline
		\multicolumn{5}{l}{Model 1:} \\
		intercept & 14.097 & 7.220 & -0.054 & 28.248 \\
		treatment & -0.007 & 8.548 & -16.760 & 16.746 \\
		\hline
		\multicolumn{5}{l}{Model 2:} \\
		intercept & 0.131 & 5.615 & -10.875 & 11.137 \\
		treatment & 0.415 & 1.464 & -2.454 & 3.284 \\
		age & 0.308 & 0.041 & 0.229 & 0.388  \\
		gender & 3.142 & 1.319 & 0.557 & 5.726 \\
		severity & -2.236 & 0.396 & -3.013 & -1.460 \\
		BMI & 0.307 & 0.052 & 0.206 & 0.408 \\
		\hline\hline
	\end{tabular}
\end{table}

\begin{figure}[htbp]
	\centering
	\includegraphics[scale=0.55]{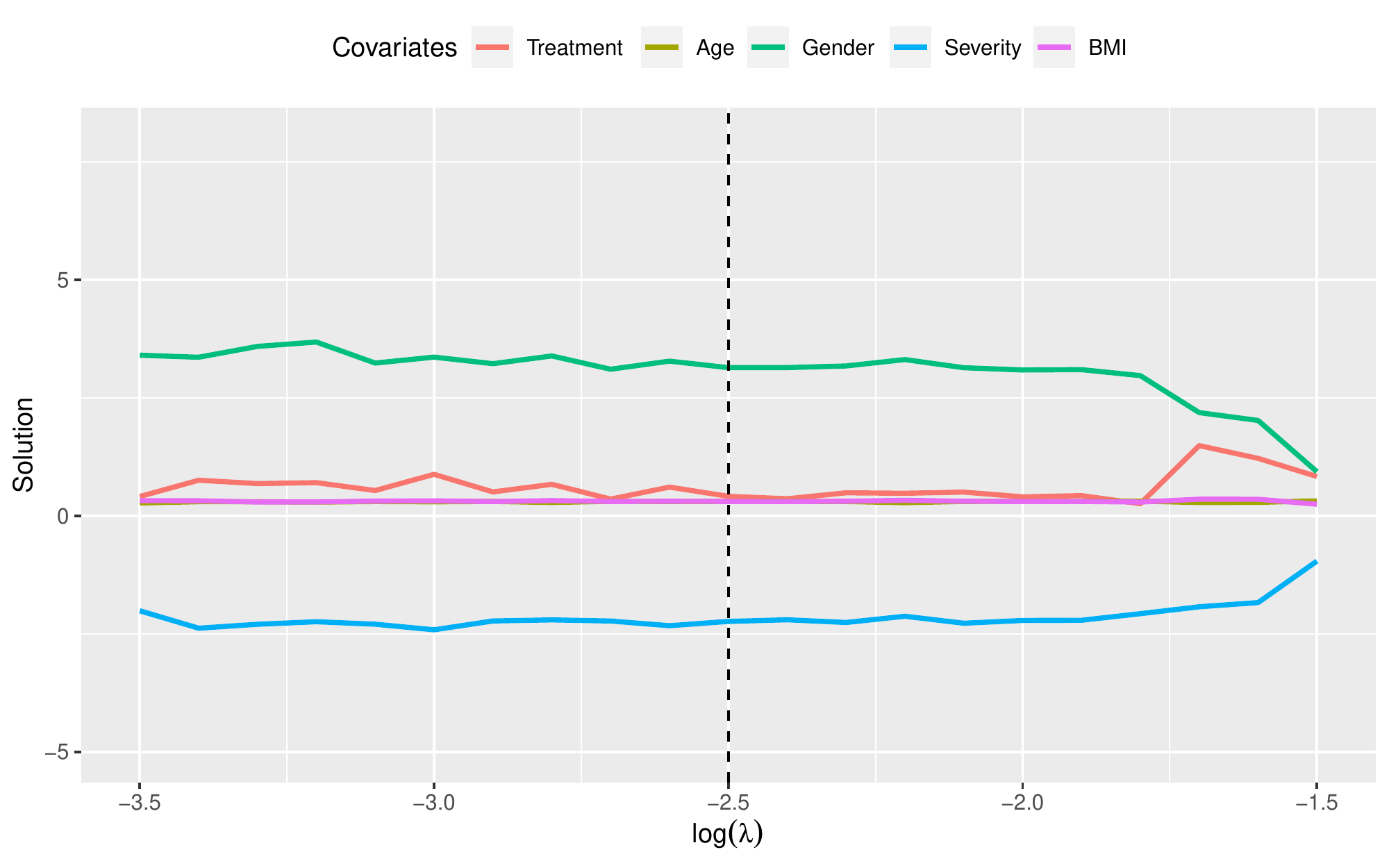}
	\caption{In ChAMP trial application, the solution path for $\wh\bb_\lambda$ as a function of $\lambda$ with $-3.5\leq\log(\lambda)\leq-1.5$. Dash line indicates the optimal $\lambda$ chosen by 5-fold cross validation.}
	\label{fig:app1}
\end{figure}

\begin{figure}[htbp]
	\centering
	\includegraphics[scale=0.55]{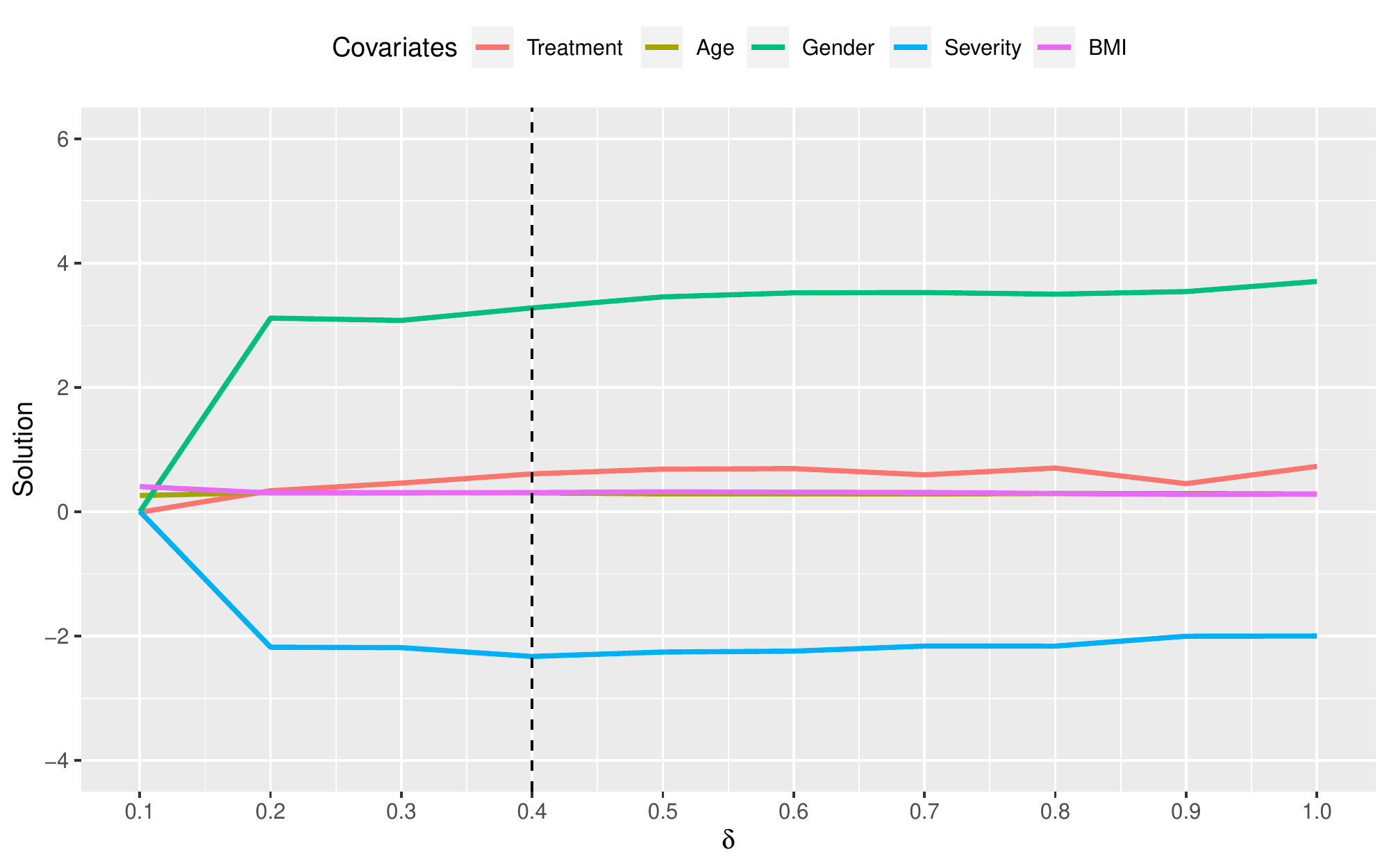}
	\caption{In ChAMP trial application, the solution path for $\wh\bb_\lambda$ as a function of $\delta$ with $0.1\leq\delta\leq1$. Dash line indicates the optimal $\delta$ chosen by 5-fold cross validation.}
	\label{fig:app2}
\end{figure}

\section*{Appendix}
\setcounter{equation}{0}\renewcommand{\theequation}{A.\arabic{equation}}
\setcounter{subsection}{0}\renewcommand{\thesubsection}{A.\arabic{subsection}}
\setcounter{table}{0}\renewcommand{\thetable}{A.\arabic{table}}

\begin{proof}[Proof of Lemma \ref{lemma1}]
	Starting from the objective function (\ref{objfun2}), we have
	\be
	& & \argmin_{\tau(\z)} ~ \frac12E\left[w(Y)\{1-Y\sign(X-\tau(\Z))\}|\Z=\z\right], \nonumber\\
	&=& \argmin_{\tau(\z)} ~ \frac12E_{Y|\Z}\big(E_{X|Y,\Z}\left[w(Y)\{1-Y\sign(X-\tau(\Z))\}|Y,\Z=\z\right]\big), \nonumber \\
	&=& \argmax_{\tau(\z)} ~ \left\{P(X\geq\tau(\Z)|Y=1,\Z=\z)-P(X\geq\tau(\Z)|Y=-1,\Z=\z)\right\}, \nonumber \\
	&=& \argmax_{\tau(\z)} ~ \left\{\int_{\tau(\z)}^{\infty}f(x-\a\trans\z)dx-\int_{\tau(\z)}^{\infty}f(x-\b\trans\z)dx\right\}.
	\label{lemma1_fun1}
	\ee
	Take the first derivative of the function in (\ref{lemma1_fun1}) and set it to 0, we have
	\bq
	-f(\tau(\z)-\a\trans\z)+f(\tau(\z)-\b\trans\z)=0.
	\label{lemma1_fun2}
	\eq
	Due to the symmetry and monotonicity of $f(t)$, we can solve  (\ref{lemma1_fun2}) by letting either $\tau(\z)-\a\trans\z=\tau(\z)-\b\trans\z$ or $\tau(\z)-a\trans\z+\tau(\z)-\b\trans(\z)=0$. Apparently the former equation cannot hold. Thus, the unique solution for equation (\ref{lemma1_fun2}) is
	\bsq
	\tau^0(\z)=(\bb^0)\trans\z=\left(\frac{\a+\b}2\right)\trans\z.
	\esq
	Next it is necessary to check whether $\tau^0(\z)$ is the maximizer. The second derivative of the function in (\ref{lemma1_fun1}) given $\tau(\z)=\tau^0(\z)$ is
	\bq -f^\prime(\tau^0(\z)-\a\trans\z)+f^\prime(\tau^0(\z)-\b\trans\z)=-f^\prime\left(\left(\frac{\b-\a}2\right)\trans\z\right)+f^\prime\left(\left(\frac{\a-\b}2\right)\trans\z\right).
	\label{lemma1_fun3}
	\eq
	We know that $(\b-\a)/2<0$ and $(\a-\b)/2>0$. According to the monotonicity of $f(t)$, if $\z\in\mathcal{R}^p_+$, it is easy to show $f^\prime((\b-\a)/2)\trans\z)\geq0$ and $f^\prime((\a-\b)/2)\trans\z)<0$; if $\z\not\in\mathcal{R}^p_+$, we let $\wt\z=\z-\min_i|z_i|=\z-\epsilon_z>0$, 
	so that $f^\prime((\b-\a)/2)\trans\wt\z)\geq0$ and $f^\prime((\a-\b)/2)\trans\wt\z)<0$. Therefore, the value of equation (\ref{lemma1_fun3}) is always below 0 and $\tau^0(\z)$ is the unique maximizer with an explicit form $\tau^0(\z)=(\bb^0)\trans\z$ where $\bb^0=(\a+\b)/2$.
\end{proof}

\begin{proof}[Proof of Lemma \ref{theorem1}]
	Let $\tau^*(\z)=(\bb^{*})\trans\z$. To show $\bb^*=\bb^0$, we only need to show $\tau^*(\z)=\tau^0(\z)$. For simplicity, we supress $\z$ in the proof so that $\tau^*(\z)=\tau^*$ and $\tau^0(\z)=\tau^0$, and denote $l_1(u)=l(u)$ in $L(u)$, $0<u\leq1/2$, and $l_2(u)=1-l(1-u)$ in $L(u)$, $1/2<u\leq1$. Let $g(\tau)$ denote the objective function in (\ref{objfun4}), $g^\prime(\tau)$ and $g^{\prime\prime}(\tau)$ its first and second order derivatives, then we have
	\allowdisplaybreaks
	\bse
	g^\prime(\tau)&=&E\left[w(Y)\left\{l_1^\prime\left(\frac{Y(X-\tau)}\delta\right)\ind_{\left\{0\leq \frac{Y(X-\tau)}\delta<1/2\right\}}+l_2^\prime\left(\frac{ Y(X-\tau)}\delta\right)\ind_{\left\{1/2\leq \frac{ Y(X-\tau)}\delta<1\right\}}\right\}\frac Y\delta\right], \\
	&=&
	\delta^{-1}\int_{\tau-\a\trans\z}^{\tau+\frac\delta2-\a\trans\z} l_1^\prime(\delta^{-1}(t+\a\trans\z-\tau))\{f(t+\a\trans\z-\b\trans\z-\delta)-f(t)\}dt \\
	& & +
	\delta^{-1}\int_{\tau-\a\trans\z}^{\tau+\frac\delta2-\a\trans\z} l_1^\prime(1/2-\delta^{-1}(t+\a\trans\z-\tau))\{f(t+\a\trans\z-\b\trans\z-\delta/2)-f(t+\delta/2)\}dt.
	\ese
	For $t\in[\tau-\a\trans\z,\tau-\a\trans\z+\delta/2]$, we have $\delta^{-1}(t+\a\trans\z-\tau)\in[0,1/2]$, $1/2-\delta^{-1}(t+\a\trans\z-\tau)\in[0,1/2]$, and $l_1(u)$ is decreasing. Thus, $l_1^\prime(\delta^{-1}(t+\a\trans\z-\tau))<0$ and $l_1^\prime(1/2-\delta^{-1}(t+\a\trans\z-\tau))<0$.
	
	If $\tau<\b\trans\z+\delta$, $f(t+\a\trans\z-\b\trans\z-\delta)-f(t)>0$ and  $f(t+\a\trans\z-\b\trans\z-\delta/2)-f(t+\delta/2)>0$ for $t\in[\tau-\a\trans\z,\tau-\a\trans\z+\delta/2]$ and the given condition of $\delta$, then $g^\prime(\tau)<0$; if $\tau>\a\trans\z-\delta$, $f(t+\a\trans\z-\b\trans\z-\delta)-f(t)<0$ and  $f(t+\a\trans\z-\b\trans\z-\delta/2)-f(t+\delta/2)<0$ for $t\in[\tau-\a\trans\z,\tau-\a\trans\z+\delta/2]$ and the given condition of $\delta$, then $g^\prime(\tau)>0$.
	Hence, the solution of $g^\prime(\tau)>0$ cannot exist outside of $[\b\trans\z+\delta, \a\trans\z-\delta]$. To prove the unique solution exists between $\b\trans\z+\delta$ and $\a\trans\z-\delta$, we need to show
	that $g^\prime(\tau)$ is increasing for $\tau\in[\b\trans\z+\delta,\a\trans\z-\delta]$, that is, $g^{\prime\prime}(\tau)>0$ for $\tau\in[\b\trans\z+\delta,\a\trans\z-\delta]$.
	One can verify
	\bse
	g^{\prime\prime}(\tau)&=&\delta^{-2}E\left\{w(Y)\left(l_1^{\prime\prime}\left(\frac{Y(X-\tau)}\delta\right)\ind_{\left\{0\leq \frac{Y(X-\tau)}\delta<\frac12\right\}}+l_2^{\prime\prime}\left(\frac{Y(X-\tau)}\delta\right)\ind_{\left\{1/2\leq \frac{Y(X-\tau)}\delta<1\right\}}\right)\right\}, \\ 
	&=&\delta^{-2}\int_{\tau-\a\trans\z}^{\tau-\a\trans\z+\frac\delta2} l_1^{\prime\prime}(\delta^{-1}(t+\a\trans\z-\tau))\{f(t)-f(t+\frac\delta2)\}dt \\
	& & + \delta^{-2}\int_{\tau-\b\trans\z-\frac\delta2}^{\tau-\b\trans\z} l_1^{\prime\prime}(-\delta^{-1}(t+\b\trans\z-\tau))\{f(t)-f(t-\frac\delta2)\}dt.
	\ese
	Since $l_1(u)$ is concave, $l_1^{\prime\prime}(u)<0$. If $\tau\in[\b\trans\z+\delta,\a\trans\z-\delta]$, then $f(t)<f(t+\frac\delta2)$ for $t\in[\tau-\a\trans\z, \tau-\a\trans\z+\frac\delta2]$ and $f(t)<f(t-\frac\delta2)$ for $t\in[\tau-\b\trans\z-\frac\delta2, \tau-\b\trans\z]$ and the given condition of $\delta$. Therefore, $g^{\prime\prime}(\tau)>0$ and we can prove that $g^\prime(\tau)=0$ has a unique solution.
	
	Since $\tau^0=(\bb^0)\trans\z=\{(\a+\b)/2\}\trans\z$, it is easy to show that $g^\prime(\tau^0)=0$.
	Now, we have proved that the solution of (\ref{objfun3}) is the same as that of (\ref{objfun4}). Since $\tau$ is a linear function of $\z$, for any given $\z$, we have $\bb^*=\bb^0$. This completes the proof.
\end{proof}

\begin{proof}[Proof of Theorem~\ref{theorem2}]
	This proof needs a preliminary result on the property of $L(u)$. We first present it as a Lemma and then prove it.
	\begin{lemma}
		For the surrogate loss family, decompose $L(u)$ as $L(u)=L_1(u)-L_2(u)$, where $L_1(u)=\{-k(u-1/2)+1/2\}\ind_{\{u\leq1/2\}}+\{1-l(1-u)\ind_{\{1/2<u\leq1\}}\}$ and $L_2(u)=\{-k(u-1/2)-1/2\}\ind_{\{u\leq0\}}+\{-k(u-1/2)+1/2-l(u)\}\ind_{\{0<u\leq1/2\}}$.
		Assume $R_1=L_1(u)-L_1(v)-L_1^\prime(v)(u-v)$ and $R_2=L_2(u)-L_2(v)-L_2^\prime(v)(u-v)$,
		and let $s_1, s_2, s_3, s_4$ be the sets where $s_1=\{(v,u): v\leq\frac12,u>\frac12 ~ \text{or} ~ \frac12<v\leq1,u\leq \frac12 ~ \text{or} ~ v>1,u\leq1 ~ \text{or} ~ \frac12<v\leq1,u>1\}$, $s_2=\{(v,u): \frac12<v\leq1,\frac12<u\leq1\}$, $s_3=\{(v,u): v\leq 0,u>0 ~ \text{or} ~ 0<v\leq \frac12,u\leq0 ~ \text{or} ~ v>\frac12,u\leq \frac12 ~ \text{or} ~ 0<v\leq \frac12,u>\frac12\}$, and $s_4=\{(v,u): 0<v\leq \frac12,0<u\leq \frac12\}$.
		Then, $s_1$ and $s_2$ are disjoint, $s_3$ and $s_4$ are disjoint, $0\leq R_1\leq k|u-v|\ind_{\{(v,u)\in s_1\}}+M|u-v|\ind_{\{(v,u)\in s_2\}}$, and
		$0\leq R_2\leq k|u-v|\ind_{\{(v,u)\in s_3\}}+M|u-v|\ind_{\{(v,u)\in s_4\}}$.
		\label{lemma2}
	\end{lemma}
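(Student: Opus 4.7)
My plan is to exploit the fact that both $L_1$ and $L_2$, as defined in the DC decomposition of $L$, turn out to be convex and $C^1$, so that the lower bounds $R_1,R_2\geq 0$ follow from the supporting-hyperplane inequality, while the upper bounds follow from the mean-value representation $R_i=\int_v^u[L_i'(s)-L_i'(v)]\,ds$ combined with monotonicity of $L_i'$.

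First, I would verify the smoothness and convexity of $L_1$ and $L_2$. Using conditions (1)--(3) on $l$, namely $l(0)=1$, $l(1/2)=1/2$, $l'(0)=0$, $l'(1/2)=-k$, and $l''$ symmetric around $1/2$ with $|l''|\leq M$, I check continuity of value and first derivative at the seam points $u=0,\tfrac12,1$ (where the piecewise pieces meet), which makes $L_1,L_2$ $C^1$. On each piece, either $L_i''\equiv 0$ (linear pieces) or $L_i''(u)=-l''(\cdot)\geq 0$ by concavity of $l$; hence $L_1$ and $L_2$ are convex, giving $R_1,R_2\geq 0$ immediately. In the process I also note that $L_1'$ takes values only in $[-k,0]$ (equal to $-k$ for $u\leq 1/2$, to $l'(1-u)\in[-k,0]$ for $1/2<u\leq 1$, and to $0$ for $u>1$), and $L_2'$ is likewise confined to $[-k,0]$.

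For the upper bound on $R_1$, I would handle $s_2$ and $s_1$ separately. On $s_2$, both $u,v\in(1/2,1]$, so $L_1$ is $C^2$ on the segment joining them with $|L_1''|=|l''(1-\cdot)|\leq M$; a second-order Taylor bound gives $R_1\leq \tfrac{M}{2}(u-v)^2$, and since $|u-v|\leq 1/2<2$ on $s_2$, this is $\leq M|u-v|$. On $s_1$, $u$ and $v$ lie in different affine/curved regimes; by convexity, $L_1'$ is monotone, so $|L_1'(s)-L_1'(v)|\leq |L_1'(u)-L_1'(v)|$ for $s$ between $v$ and $u$, and since the full range of $L_1'$ is contained in $[-k,0]$, this yields $|L_1'(u)-L_1'(v)|\leq k$ and therefore $R_1\leq k|u-v|$. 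The case split of $s_1$ into the four subcases in the statement corresponds exactly to the four ways $u$ and $v$ can straddle the seams $1/2$ and $1$, and each is checked by the same monotonicity argument. For $(v,u)\notin s_1\cup s_2$, both points lie in a common affine or constant piece, so $L_1$ is affine there and $R_1=0$, consistent with the claimed bound. The argument for $R_2$ is entirely analogous, with the seams shifted to $0$ and $1/2$: on $s_4$ one uses $|L_2''|\leq M$ and $|u-v|\leq 1/2$, and on $s_3$ one uses that $L_2'$ ranges in $[-k,0]$.

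Finally, the disjointness claims $s_1\cap s_2=\emptyset$ and $s_3\cap s_4=\emptyset$ are immediate from the definitions: membership in $s_2$ forces both $u,v\in(1/2,1]$, whereas each subcase of $s_1$ places at least one of them outside $(1/2,1]$. I expect the main obstacle to be purely bookkeeping — keeping the four subcases of $s_1$ (and of $s_3$) straight at the seam points $u\in\{0,1/2,1\}$ where one-sided derivatives must be reconciled — but conceptually the whole argument reduces to the single fact that $L_i$ is convex with $|L_i'|\leq k$ globally and $|L_i''|\leq M$ inside the curved piece.
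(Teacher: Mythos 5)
Your proposal is correct, and it reaches the same bounds as the paper but by a noticeably more unified route. The paper's proof never states that $L_1$ and $L_2$ are globally convex and $C^1$; instead it fixes the branch containing $v$ (three cases for $R_1$: $v\leq\frac12$, $\frac12<v\leq1$, $v>1$), writes out $R_1$ explicitly in each, and then sub-cases on $u$, invoking convexity of $l_2(u)=1-l(1-u)$ and the endpoint relations $l_2(\frac12)=\frac12$, $l_2'(\frac12)=-k$, $l_2(1)=0$ piece by piece; the bound on $s_2$ appears there as $M(u-v)^2$ and is converted to $M|u-v|$ by silently using $|u-v|\leq\frac12$, exactly as you do explicitly. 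You instead front-load three structural facts --- each $L_i$ is convex and $C^1$ (with the seam checks at $0,\frac12,1$ hinging on $l'(0)=0$ and $l'(1/2)=-k$), $L_i'$ ranges in $[-k,0]$ everywhere, and $|L_i''|\leq M$ on the single curved piece --- and then obtain nonnegativity from the supporting-line inequality and both upper bounds from the integral remainder representation. This collapses the paper's many sub-cases into one argument per set, and it makes explicit something the paper leaves implicit: on the complement of $s_1\cup s_2$ (respectively $s_3\cup s_4$) the segment from $v$ to $u$ lies in a single affine or constant piece, so $R_i=0$ and the indicator structure of the bound is exactly right. The one point to be careful about in a full write-up is the seam bookkeeping you already flag --- the matching of one-sided derivatives at the breakpoints is precisely what legitimizes the global convexity claim, and it is exactly here that conditions (1)--(3) on $l$ are consumed.
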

	\begin{proof}[Proof of Lemma \ref{lemma2}]
		For simplicity, let $l_1(u)=l(u)$ in $L(u)$, $0<u\leq1/2$, and $l_2(u)=1-l(1-u)$ in $L(u)$, $1/2<u\leq1$. According to the definition of $R_1$, we have
		\bse
		R_1&=&\{k(\frac12-u)+\frac12\}\ind_{\{u\leq \frac12\}}+l_2(u)\ind_{\{\frac12<u\leq 1\}} -\{k(\frac12-v)+\frac12\}\ind_{\{v\leq \frac12\}}-l_2(v)\ind_{\{\frac12<v\leq 1\}} \\
		&&+\left(k\ind_{\{v\leq \frac12\}}-l_2^\prime(v)\ind_{\{\frac12<v\leq1\}}\right)(u-v).
		\ese
		First for $v\leq\frac12$,
		\allowdisplaybreaks
		\bse
		R_1&=&\{k(\frac12-u)+\frac12\}\ind_{\{u\leq \frac12\}}+l_2(u)\ind_{\{\frac12<u\leq 1\}}-\{k(\frac12-v)+\frac12\}+k(u-v) \\
		&=&\{l_2(u)-k(\frac12-u)-\frac12\}\ind_{\{\frac12<u\leq1\}}+\{k(u-\frac12)-\frac12\}\ind_{\{u>1\}}.
		\ese
		If $\frac12<u\leq1$, then $R_1=l_2(u)-k(\frac12-u)-\frac12$. Due to the convexity of $l_2$, we have
		\bse
		R_1&\geq& l_2(\frac12)+l_2^\prime(\frac12)(u-\frac12)-k(\frac12-u)-\frac12=0, \\
		R_1&\leq&-k(\frac12-u)\leq-k(v-u)=k(u-v), \quad \text{by $l_2(u)\leq \frac12$ and $v\leq \frac12$.}
		\ese
		If $u>1$, then $R_1=k(u-\frac12)-\frac12$. Since $-\frac12k-\frac12\geq -k$ derived from $l_2(1)\geq l_2(\frac12)+l_2^\prime(\frac12)(1-\frac12)$, we can get
		$R_1\geq k(u-1)\geq0$ and $R_1\leq k(u-\frac12)\leq k(u-v)$.
		Hence,
		\bse
		R_1\leq k(u-v)\ind_{\{v\leq \frac12,\frac12<u\leq1\}}+k(u-v)\ind_{\{v\leq \frac12,u>1\}}
		= k(u-v)\ind_{\{v\leq \frac12,u>\frac12\}}.
		\ese
		Then for $v>1$, $R_1=\{k(\frac12-u)+\frac12\}\ind_{\{u\leq \frac12\}}+l_2(u)\ind_{\{\frac12<u\leq 1\}}$.
		If $u\leq \frac12$, then $R_1=k(\frac12-u)+\frac12$, so $R_1\geq k(\frac12-u)\geq0$ and $R_1\leq k(1-u)\leq k(v-u)$.
		If $\frac12<u\leq1$, then $R_1=l_2(u)$. By the convexity of $l_2$ and $v>1$, we have $R_1\geq 0$ and $R_1\leq l_2(1)-l_2^\prime(u)(1-u)=-l_2^\prime(u)(1-u)\leq -l_2^\prime(\frac12)(1-u)\leq k(v-u)$.
		Thus, \bse
		R_1 \leq k(v-u)\ind_{\{v>1,u\leq \frac12\}}+k(v-u)\ind_{\{v>1,\frac12<u\leq1\}}
		= k(v-u)\ind_{\{v>1,u\leq1\}}.
		\ese
		At last for $\frac12<v\leq1$,
		\bse
		R_1&=&\{k(\frac12-u)+\frac12\}\ind_{\{u\leq \frac12\}}+l_2(u)\ind_{\{\frac12<u\leq 1\}}-l_2(v)-l_2^\prime(v)(u-v) \\
		&=&\{k(\frac12-u)+\frac12-l_2(v)-l_2^\prime(v)(u-v)\}\ind_{\{u\leq \frac12\}} \\
		& &+\{l_2(u)-l_2(v)-l_2^\prime(v)(u-v)\}\ind_{\{\frac12<u\leq 1\}} +\{-l_2(v)-l_2^\prime(v)(u-v)\}\ind_{\{u>1\}}.
		\ese
		If $u\leq \frac12$, then $R_1=k(\frac12-u)+\frac12-l_2(v)-l_2^\prime(v)(u-v)$. Let $v^*$ be a fixed value between $u$ and $v$, then
		$R_1=k(\frac12-u)+l_2(\frac12)-l_2(v)-l_2^\prime(v)(\frac12-v)+l_2^\prime(v)(\frac12-u) \geq 0$, and $R_1=k(\frac12-u)+\frac12-l_2(v)-l_2^\prime(v)(u-v)\leq k(v-u)$.
		If $\frac12<u\leq1$, then $R_1=l_2(u)-l_2(v)-l_2^\prime(v)(u-v)$. Obviously by the convexity of $l_2$, we have $R_1\geq 0$ and $R_1=l_2(u)-l_2(v)-l_2^\prime(v)(u-v)\leq M(u-v)^2$.
		If $u>1$, then $R_1=-l_2(v)-l_2^\prime(v)(u-v)=l_2(1)-l_2(v)-l_2^\prime(v)(u-v)$, therefore,
		\bse
		R_1&\geq& l_2(1)-l_2(v)-l_2^\prime(v)(1-v)\geq0 \quad\text{by the convexity of $l_2$, $l_2(1)=0$ and $u>1$}, \\
		R_1&\leq&-l_2^\prime(v)(u-v)\leq k(u-v) \quad\text{by $l_2(v)\geq0$ and $l_2^\prime(v)\geq -k$}.
		\ese
		Therefore, $R_1 \leq k(v-u)\ind_{\{\frac12<v\leq1,u\leq \frac12\}}+M(u-v)^2\ind_{\{\frac12<v\leq1,\frac12<u\leq1\}}+k(u-v)\ind_{\{\frac12<v\leq1,u>1\}}$.
		In summary, $R_1$ is always nonnegative, and
		\bse
		R_1
		&\leq&k|u-v|\ind_{\{v\leq \frac12,u>\frac12 ~ \text{or} ~ \frac12<v\leq1,u\leq \frac12 ~ \text{or} ~ v>1,u\leq1 ~ \text{or} ~ \frac12<v\leq1,u>1\}}+M|u-v|^2\ind_{\{\frac12<v\leq1,\frac12<u\leq1\}} \\
		&=&k|u-v|\ind_{\{(v,u)\in s_1\}}+M|u-v|\ind_{\{(v,u)\in s_2\}}.
		\ese
		Next, one can similarly prove that $R_2$ is also always nonnegative, and
		\bse
		R_2
		&\leq&k|u-v|\ind_{\{v\leq 0,u>0 ~ \text{or} ~ 0<v\leq \frac12,u\leq 0 ~ \text{or} ~ v>\frac12,u\leq \frac12 ~ \text{or} ~ 0<v\leq \frac12,u>\frac12\}}+M|u-v|^2\ind_{\{0<v\leq \frac12,0<u\leq \frac12\}}, \\
		&=&k|u-v|\ind_{\{(v,u)\in s_3\}}+M|u-v|\ind_{\{(v,u)\in s_4\}}.
		\ese
	\end{proof}
	Now, for fixed $\tth\in\mathcal{R}^p$, define $\Lambda_n(\tth)=n(\ml_{\lambda}(\bb^*+\tth/\sqrt{n})-\ml_{\lambda}(\bb^*))$ and $\Gamma_n(\tth)=E\Lambda_n(\tth)$.
	Observe that $\Gamma_n(\tth)=n\left(\ml(\bb^*+\tth/\sqrt{n})-\ml(\bb^*)\right)+\frac\lambda2\left(||\tth_{-1}||^2+2\sqrt{n}(\bb^{*})\trans\tth_{-1}\right)$,
	where $\tth_{-1}=\tth\setminus\theta_1$ and $\theta_1$ is the first item in $\tth$. Expanding $\ml(\bb^*+\tth/\sqrt{n})$ around $\bb^*$ through Taylor expansion, we have
	\bse
	\Gamma_n(\tth)
	&= & \frac12\tth\trans \H(\wb\bb)\tth+\frac\lambda2\left(||\tth_{-1}||^2+2\sqrt{n}(\bb^*)\trans\tth_{-1}\right),
	\ese
	where $\wb\bb=\bb^*+(t/\sqrt{n})\tth$ for some $0<t<1$.
	Define $D_{jk}(\rr)=\H(\bb^*+\rr)_{jk}-\H(\bb^*)_{jk}$ for $0\leq j\leq p,0\leq k\leq p$. Since $\H(\wb\bb)$ is continuous in $\wb\bb$, there exists $\delta_0>0$ such that $|D_{jk}(\rr)|<\epsilon_0$ if $||\rr||<\delta_0$ for any $\epsilon_0>0$ and all $0\leq j\leq p,0\leq k\leq p$. Then as $n\rightarrow\infty$, since $||(t/\sqrt{n})\tth||<\delta_0$, we have
	\bse
	|\tth\trans(\H(\wb\bb)-\H(\bb^*))\tth| \leq \sum_{j,k}|\theta_j||\theta_k||D_{j,k}(t/\sqrt{n}\tth)| \leq \epsilon_0\sum_{j,k}|\theta_j||\theta_k| \leq 2\epsilon_0||\tth||^2,
	\ese
	and hence $\frac12\tth\trans \H(\wb\bb)\tth=\frac12\tth\trans \H(\bb^*)\tth+o(1)$.
	Together with the assumption that $\lambda=o(n^{-1/2})$, we have $\Gamma_n(\tth)=\frac12\tth\trans\H(\bb^*)\tth+o(1)$.
	For simplicity, denote $l_1(u)=l(u)$ in $L(u)$, $0<u\leq1/2$, and $l_2(u)=1-l(1-u)$ in $L(u)$, $1/2<u\leq1$. Then define
	\be
	\W_n&=& \sumi\left\{w_iL_{\delta1}^\prime\left(\xi_i(\bb^*)\right)\xxi_i^\prime(\bb^*)\right\}-\sumi\left\{w_iL_{\delta2}^\prime\left(\xi_i(\bb^*)\right)\xxi_i^\prime(\bb^*)\right\} \nonumber \\ &=&-\delta^{-1}\sumi\left[\left\{l_1^\prime\left(\xi_i(\bb^*)/\delta\right)\ind_{\left\{0<\xi_i(\bb^*)\leq \frac\delta2\right\}}+l_2^\prime\left(\xi_i(\bb^*)/\delta\right)\ind_{\left\{\frac\delta2<\xi_i(\bb^*)\leq\delta\right\}}\right\}w_iy_i\z_i\right],
	\ee
	where $\W_n$ is a $p-$dimensional vector. Then $\frac{1}{\sqrt{n}}\W_n$ follows asymptotically $N(0,n\G(\bb^*))$ by central limit theorem, where
	\bsq \G(\bb^*)=\delta^{-2}E\left[\left\{\left(l_1^{\prime}\left(\xi(\bb^*)/\delta\right)\right)^2\ind_{\left\{0<\xi(\bb^*)\leq \frac\delta2\right\}}+\left(l_2^{\prime}\left(\xi(\bb^*)/\delta\right)\right)^2\ind_{\left\{\frac\delta2<\xi(\bb^*)\leq \delta\right\}}\right\}w(Y)^2\Z\Z\trans\right].
	\esq
	This is because $E\left\{w_iL_{\delta1}^\prime\left(\xi_i(\bb^*)\right)\xxi_i^\prime(\bb^*)-w_iL_{\delta2}^\prime\left(\xi_i(\bb^*)\right)\xxi_i^\prime(\bb^*)\right\}=0$ by $S(\bb^*)=0$, and
	\bse
	& & E\left\{w_iL_{\delta1}^\prime\left(\xi_i(\bb^*)\right)\xxi_i^\prime(\bb^*)-w_iL_{\delta2}^\prime\left(\xi_i(\bb^*)\right)\xxi_i^\prime(\bb^*)\right\}^2 \nonumber \\ &=&\delta^{-2}E\left[\left\{(l_1^{\prime}\left(\xi_i(\bb^*)/\delta\right))^2\ind_{\left\{0<\xi_i(\bb^*)\leq \frac\delta2\right\}}+(l_2^{\prime}\left(\xi_i(\bb^*)/\delta\right))^2\ind_{\left\{\frac\delta2<\xi_i(\bb^*)\leq \delta\right\}}\right\}w_i^2\z_i\z_i\trans\right].
	\ese
	Now define $R_{i,n}(\tth)=R_{1i,n}(\tth)-R_{2i,n}(\tth)$, where
	\bse R_{1i,n}(\tth)&=&w_iL_{\delta1}\left(\xi_i(\bb^*+\tth/\sqrt{n})\right)-w_iL_{\delta1}\left(\xi_i(\bb^*)\right)-w_iL_{\delta1}^\prime\left(\xi_i(\bb^*)\right)\left(\xi_i(\bb^*+\tth/\sqrt{n})-\xi_i(\bb^*)\right), \\ R_{2i,n}(\tth)&=&w_iL_{\delta2}\left(\xi_i(\bb^*+\tth/\sqrt{n})\right)-w_iL_{\delta2}\left(\xi_i(\bb^*)\right)-w_iL_{\delta2}^\prime\left(\xi_i(\bb^*)\right)\left(\xi_i(\bb^*+\tth/\sqrt{n})-\xi_i(\bb^*)\right),
	\ese
	and we have $\Lambda_n(\tth)=\Gamma_n(\tth)+\W_n\trans\tth/\sqrt{n}+\sumi(R_{i,n}(\tth)-ER_{i,n}(\tth))$, and
	\be
	R_{1i,n}(\tth)&\leq& k|\delta^{-1}\z_i\trans\tth/\sqrt{n}|\ind_{\{(\xi_i(\bb^*),\xi_i(\bb^*+\tth/\sqrt{n}))\in s_{1i}\}}+M|\delta^{-1}\z_i\trans\tth/\sqrt{n}|^2\ind_{\{(\xi_i(\bb^*),\xi_i(\bb^*+\tth/\sqrt{n}))\in s_{2i}\}}, \label{ineR1} \\
	R_{2i,n}(\tth)&\leq& k|\delta^{-1}\z_i\trans\tth/\sqrt{n}|\ind_{\{(\xi_i(\bb^*),\xi_i(\bb^*+\tth/\sqrt{n}))\in s_{3i}\}}+M|\delta^{-1}\z_i\trans\tth/\sqrt{n}|^2\ind_{\{(\xi_i(\bb^*),\xi_i(\bb^*+\tth/\sqrt{n}))\in s_{4i}\}}, \label{ineR2}
	\ee
	where
	\bse
	s_{1i}&=&\{(\xi_i(\bb^*),\xi_i(\bb^*+\tth/\sqrt{n})): \xi_i(\bb^*)\leq \frac\delta2,\xi_i(\bb^*+\tth/\sqrt{n})>\frac\delta2 ~ \text{or} ~ \frac\delta2<\xi_i(\bb^*)\leq\delta, \\
	&& \xi_i(\bb^*+\tth/\sqrt{n})\leq \frac\delta2 ~ \text{or} ~ \xi_i(\bb^*)>\delta,\xi_i(\bb^*+\tth/\sqrt{n})\leq\delta ~ \text{or} ~ \frac\delta2<\xi_i(\bb^*)\leq\delta,\xi_i(\bb^*+\tth/\sqrt{n})>\delta\}, \\
	s_{2i}&=&\{(\xi_i(\bb^*),\xi_i(\bb^*+\tth/\sqrt{n})): \frac\delta2<\xi_i(\bb^*)\leq\delta,\frac\delta2<\xi_i(\bb^*+\tth/\sqrt{n})\leq\delta\}, \\
	s_{3i}&=&\{(\xi_i(\bb^*),\xi_i(\bb^*+\tth/\sqrt{n})): \xi_i(\bb^*)\leq 0,\xi_i(\bb^*+\tth/\sqrt{n})>0 ~ \text{or} ~ 0<\xi_i(\bb^*)\leq \frac\delta2, \\
	&&\xi_i(\bb^*+\tth/\sqrt{n})\leq0 ~ \text{or} ~ \xi_i(\bb^*)>\frac\delta2,\xi_i(\bb^*+\tth/\sqrt{n})\leq\frac\delta2 ~ \text{or} ~ 0<\xi_i(\bb^*)\leq \frac\delta2,\xi_i(\bb^*+\tth/\sqrt{n})>\frac\delta2\}, \\
	s_{4i}&=&\{(\xi_i(\bb^*),\xi_i(\bb^*+\tth/\sqrt{n})): 0<\xi_i(\bb^*)\leq \frac\delta2,\frac\delta2<\xi_i(\bb^*+\tth/\sqrt{n})\leq \frac\delta2\}.
	\ese
	Since $0<w_i<1$, applying Lemma \ref{lemma2}, we set $L_1(u)=w_iL_{\delta1}(\xi_i(\bb^*+\tth/\sqrt{n}))=w_iL_1(\xi_i(\bb^*+\tth/\sqrt{n})/\delta)$ and $L_1(v)=w_iL_{\delta1}(\xi_i(\bb^*))=w_iL_2(\xi_i(\bb^*)/\delta)$ to verify inequalities (\ref{ineR1}); set $L_2(u)=w_iL_{\delta2}(\xi_i(\bb^*+\tth/\sqrt{n}))=w_iL_2(\xi_i(\bb^*+\tth/\sqrt{n})/\delta)$ and $L_2(v)=w_iL_{\delta1}(\xi_i2(\bb^*))=w_iL_2(\xi_i(\bb^*)/\delta)$ to verify (\ref{ineR2}).
	Now we need to show that $\sumi(R_{i,n}(\tth)-ER_{i,n}(\tth))=o_\mathbb{P}(1)$:
	\be
	& &E\left\{\sumi\left(R_{i,n}(\tth)-ER_{i,n}(\tth)\right)\right\}^2, \nonumber \\
	&= & E\left\{\sumi\left(R_{i,n}(\tth)-ER_{i,n}(\tth)\right)^2-\sum_{i\neq j}\left(R_{i,n}(\tth)-ER_{i,n}(\tth)\right)\left(R_{j,n}(\tth)-ER_{j,n}(\tth)\right)\right\}.
	\label{crossproduct3}
	\ee
	The cross-product terms in (\ref{crossproduct3}) can be cancelled out, thus we have
	\allowdisplaybreaks
	\bse
	&&\sumi E\left(|R_{i,n}(\tth)-ER_{i,n}(\tth)|^2\right)=\sumi E(R_{i,n}(\tth)^2)-n(ER_{i,n}(\tth))^2, \\
	&\leq & \sumi E(R_{i,n}(\tth)^2)= \sumi E(R_{1i,n}(\tth)-R_{2i,n}(\tth))^2.
	\ese
	We can prove that the upper bound of $\sumi E(R_{1i,n}(\tth)-R_{2i,n}(\tth))^2$ tends to be 0. Since $R_{1i,n}(\tth)$ and $R_{2i,n}(\tth)$ are both non-negative, we have
	\bse
	& &\sumi E\left(R_{1i,n}(\tth)-R_{2i,n}(\tth)\right)^2\leq \sumi E(R_{1i,n}(\tth)^2)+\sumi E(R_{2i,n}(\tth)^2).
	\ese
	The upper bounds of $R_{1i,n}(\tth)$ and $R_{2i,n}(\tth)$ are given by Lemma \ref{lemma2}, so we can obtain
	\bse
	R_{1i,n}^2(\tth)&\leq& k^2|\delta^{-1}\z_i\trans\tth/\sqrt{n}|^2\ind_{\{(\xi_i(\bb^*),\xi_i(\bb^*+\tth/\sqrt{n}))\in s_{1i}\}}+M^2|\delta^{-1}\z_i\trans\tth/\sqrt{n}|^4\ind_{\{(\xi_i(\bb^*),\xi_i(\bb^*+\tth/\sqrt{n}))\in s_{2i}\}}, \\ &\leq&k^2|\delta^{-1}\z_i\trans\tth/\sqrt{n}|^2\ind_{\{|\delta^{-1}\z_i\trans\tth/\sqrt{n}|\geq|\xi_i(\bb^*)-\frac\delta2| ~\text{or} ~|\delta^{-1}\z_i\trans\tth/\sqrt{n}|\geq|\delta-\xi_i(\bb^*)|\}}+M^2|\delta^{-1}\z_i\trans\tth/\sqrt{n}|^4, \\ &=&k^2|\delta^{-1}\z_i\trans\tth/\sqrt{n}|^2U_1(|\delta^{-1}\z_i\trans\tth/\sqrt{n}|)+M^2|\delta^{-1}\z_i\trans\tth/\sqrt{n}|^4, \\
	R_{2i,n}^2(\tth)&\leq& k^2|\delta^{-1}\z_i\trans\tth/\sqrt{n}|^2\ind_{\{(\xi_i(\bb^*),\xi_i(\bb^*+\tth/\sqrt{n}))\in s_{3i}\}}+M^2|\delta^{-1}\z_i\trans\tth/\sqrt{n}|^4\ind_{\{(\xi_i(\bb^*),\xi_i(\bb^*+\tth/\sqrt{n}))\in s_{4i}\}}, \\ &\leq&k^2|\delta^{-1}\z_i\trans\tth/\sqrt{n}|^2\ind_{\{|\delta^{-1}\z_i\trans\tth/\sqrt{n}|\geq|\xi_i(\bb^*)| ~\text{or} ~ |\delta^{-1}\z_i\trans\tth/\sqrt{n}|\geq|\frac\delta2-\xi_i(\bb^*)|\}}+M^2|\delta^{-1}\z_i\trans\tth/\sqrt{n}|^4, \\ &=&k^2|\delta^{-1}\z_i\trans\tth/\sqrt{n}|^2U_2(|\delta^{-1}\z_i\trans\tth/\sqrt{n}|)+M^2|\delta^{-1}\z_i\trans\tth/\sqrt{n}|^4,
	\ese
	where $U_1(t)=\ind_{\{t\geq\min(|\xi_i(\bb^*)-\delta/2|,|\delta-\xi_i(\bb^*)|)\}}$
	and $U_2(t)=\ind_{\{t\geq\min(|\xi_i(\bb^*)|,|\delta/2-\xi_i(\bb^*)|)\}}$, for $t\in \mathcal{R}$.
	Hence,
	\allowdisplaybreaks
	\bse
	&&\sumi E(R_{1i,n}(\tth)^2)+\sumi E(R_{2i,n}(\tth)^2), \\
	&\leq & \sumi E\left\{ k^2|\delta^{-1}\z_i\trans\tth/\sqrt{n}|^2U_1(|\delta^{-1}\z_i\trans\tth/\sqrt{n}|)+M^2|\delta^{-1}\z_i\trans\tth/\sqrt{n}|^4\right\} \\
	& &+\sumi E\left\{ k^2|\delta^{-1}\z_i\trans\tth/\sqrt{n}|^2U_2(|\delta^{-1}\z_i\trans\tth/\sqrt{n}|)+M^2|\delta^{-1}\z_i\trans\tth/\sqrt{n}|^4\right\}, \\
	&\leq & \sumi E\left\{ k^2n^{-1}\delta^{-2}||\z_i||^2||\tth||^2U_1(\delta^{-1}||\z_i|| ||\tth||/\sqrt{n})+M^2\delta^{-4}n^{-2}||\z_i||^4||\tth||^4\right\} \\
	& &+\sumi E\left\{ k^2n^{-1}\delta^{-2}||\z_i||^2||\tth||^2U_2(\delta^{-1}||\z_i|| ||\tth||/\sqrt{n})+M^2\delta^{-4}n^{-2}||\z_i||^4||\tth||^4\right\} \\
	&= &  k^2||\tth||^2\delta^{-2}E\left\{ ||\Z||^2\left(U_1(\delta^{-1}||\Z|| ||\tth||/\sqrt{n})+U_2(\delta^{-1}||\Z|| ||\tth||/\sqrt{n})\right)\right\} +2M^2||\tth||^4\delta^{-4}n^{-1}E||\Z||^4.
	\ese
	(C2) implies that $E(||\Z||^2)<\infty$ and $E(||\Z||^4)<\infty$. Hence, for any $\epsilon_1$, 
	as $\delta\rightarrow0$, there exists $C$ such that $E\{||\Z||^2\ind_{\{||\Z||>C\}}\}<\delta^2\epsilon_1/2$. Therefore,
	\bse
	E\left\{||\Z||^2U_1(\delta^{-1}||\Z|| ||\tth||/\sqrt{n})\right\} &\leq&E\{||\Z||^2\ind_{\{||\Z||>C\}}\}+C^2P\left(U_1(\delta^{-1}||C|| ||\tth||/\sqrt{n})\right), \\
	E\left\{||\Z||^2U_2(\delta^{-1}||\Z|| ||\tth||/\sqrt{n})\right\} &\leq&E\{||\Z||^2\ind_{\{||\Z||>C\}}\}+C^2P\left(U_2(\delta^{-1}||C|| ||\tth||/\sqrt{n})\right).
	\ese
	By (C2), 
	the distribution of $\Z$ is not degenerate, which in turn implies that $\lim_{t\rightarrow0}P(U_{i}(t))=0$ for $i=1,2$. There exists large $N_1$ and $N_2$ such that
	\bse
	P\left(U_1(\delta^{-1}||\Z|| ||\tth||/\sqrt{n})\right)<\delta^2\epsilon_1/(2C^2) ~ \text{for} ~ n\geq N_1, \\
	P\left(U_2(\delta^{-1}||\Z|| ||\tth||/\sqrt{n})\right)<\delta^2\epsilon_1/(2C^2) ~ \text{for} ~ n\geq N_2.
	\ese
	Together with (C3), we can obtain $\delta^{-4}n^{-1}E||\Z||^4\rightarrow0$ for $n\rightarrow\infty$. As a result, for $n\geq\max(N_1,N_2)$, it is easy to get
	\bsq
	\sumi E(|R_{i,n}(\tth)-ER_{i,n}(\tth)|^2)\rightarrow 0 ~\text{as}~ n\rightarrow\infty.
	\esq
	For each fixed $\tth$, $\Lambda_n(\tth)=\frac12\tth\trans \H(\bb^*)\tth+\W_n\trans\tth/\sqrt{n}+o_p(1)$.
	Let $\ett_n=-\H(\bb^*)^{-1}\W_n/\sqrt{n}$. By Convexity Lemma in \citep{pollard1991asymptotics}, which was proved in \citep{andersen1982cox}, we have
	\bsq
	\Lambda_n(\tth)=\frac12(\tth-\ett_n)\trans \H(\bb^*)(\tth-\ett_n)-\frac12\ett_n\trans \H(\bb^*)\ett_n+r_n(\tth),
	\esq
	where for each compact set $K$ in $\mathbb{R}^{p}$,
	$\sup_{\tth\in K}|r_n(\tth)|\rightarrow0 ~\text{in probability}$.
	Since $\ett_n$ converges in distribution, there exists a compact set $K$ containing $B_\varepsilon$, where $B_\varepsilon$ is a closed ball with center $\ett_n$ and radius $\varepsilon$ with probability arbritrarily close to one. Hence,
	\bsq
	\Delta_n=\sup_{\tth\in B_\varepsilon}|r_n(\tth)|\rightarrow0 ~ \text{in probability}.
	\esq
	Now consider the behavior of $\Lambda_n$ outside $B_\varepsilon$. Suppose $\tth=\ett_n+\zeta \v$, with $\zeta>\varepsilon$ and $\v$ is a unit vector. Define $\tth^*=\ett_n+\varepsilon \v$ as the boundary point of $B_\varepsilon$. Convexity of $\Lambda_n$ and the definition of $\Delta_n$ imply
	\be \frac\varepsilon\zeta\Lambda_n(\tth)+(1-\frac\varepsilon\zeta)\Lambda_n(\ett_n) &\geq & \Lambda_n(\tth^*) \nonumber \\
	&\geq & \frac12(\tth^*-\ett_n)\trans \H(\bb^*)(\tth^*-\ett_n)-\frac12\ett_n\trans \H(\bb^*)\ett_n-\Delta_n \nonumber \\
	&\geq & \frac{C^\prime}2\varepsilon^2+\Lambda_n(\ett_n)-2\Delta_n.
	\label{outside3}
	\ee
	The right side of the inequality (\ref{outside3}) does not depend on $\tth$. Thus, we have
	\bsq
	\inf_{||\tth-\ett_n||\geq\varepsilon} \Lambda_n(\tth)\geq \Lambda_n(\ett_n)+\frac\zeta\varepsilon(\frac{C^\prime}{2}\varepsilon^2-2\Delta_n).
	\esq
	With $\pr(2\Delta_n<C^\prime\varepsilon^2/2)$ tending to one, the minmium of $\Lambda_n$ cannot occur at any $\tth$ with $||\tth-\ett_n||>\varepsilon$. Hence, for each $\varepsilon>0$ and $\wh\tth_{\lambda}=\sqrt{n}(\wh\bb_{\lambda}-\bb^*)$, we have
	\bsq
	\pr(||\wh\tth_{\lambda}-\ett_n||<\varepsilon)\longrightarrow1.
	\esq
	Therefore, we can prove that $\sqrt{n}(\wh\bb_{\lambda}-\bb^*)$ has an asymptotical normal distribution. This completes the proof.
\end{proof}

\bibliographystyle{agsm}
\bibliography{reference}
\end{document}